\newtheorem{theorem}{Theorem}
\newcolumntype{P}[1]{>{\centering\arraybackslash}p{#1}}
\begin{document}
\title{
Efficient Estimation of Multiple Temperatures via a Collisional Model
}
\author{Srijon Ghosh} 
\email{srijon.ghosh@unipa.it}
\affiliation{Dipartimento di Ingegneria, Università degli Studi di Palermo, Viale delle Scienze, 90128 Palermo, Italy}

\author{Sagnik Chakraborty}
\email{sagnik.chakraborty@unipa.it}
\affiliation{Dipartimento di Ingegneria, Università degli Studi di Palermo, Viale delle Scienze, 90128 Palermo, Italy}

\author{Rosario Lo Franco}
\email{rosario.lofranco@unipa.it}
\affiliation{Dipartimento di Ingegneria, Università degli Studi di Palermo, Viale delle Scienze, 90128 Palermo, Italy}

\begin{abstract}

We present a quantum thermometric protocol for the estimation of multiple temperatures within the collisional model framework. Employing the formalism of multiparameter quantum metrology, we develop a systematic strategy to estimate the temperatures of several thermal reservoirs with minimal estimation error. We prove a necessary and sufficient condition for the singularity of the Fisher information matrix for a bi-parametrized qubit state. By using controlled rotations of ancillary systems between successive interaction stages, we eliminate parameter interdependencies, thereby rendering the quantum Fisher information matrix non-singular. Remarkably, we demonstrate that precision enhancement in the joint estimation of multiple temperatures can be achieved even in the absence of correlations among the ancillas, surpassing the corresponding thermal Fisher information limits. Exploiting correlations within the ancillary system yields additional enhancement of Fisher information. Finally, we identify the dimensionality of the ancillary systems as a key factor governing the efficiency of multiparameter temperature estimation.

\end{abstract}

\maketitle

\section{Introduction}
Estimating temperatures with precision has widespread applications ranging from assessing how hot the Earth is getting to measuring milli-Kelvin temperatures in a cold-atom Lab. Improving from the thick mercury thermometer lying around in every household, we have now ascended to using the energy levels of an atom to guess the temperature. In this regard, a very useful addition is quantum thermometry \cite{mehboudi2019thermometry, thomas2010, matteo2011, matteo2012, hofer2017, luis2015, Ivanov_2019, rubio2021, DePasquale2018, pati2020, karen2021, alves2022, luiz2022}, where a small probe is sent out to interact and collect information about a bath, and then, using techniques of quantum estimation theory, the temperature of the bath is ascertained. 

The problems of quantum thermometry can be formulated within the broader framework of parameter estimation theory \cite{helstrom1969quantum,paris2004quantum, montenegro2025quantum}, highlighting how temperature can be treated as an estimable quantity governed by the fundamental limits of quantum measurement. The primary prescription of the temperature estimation of a single bath is to thermalize a probe, having a specific heat $C$, to the environment with temperature $T$, and then measuring that probe provides the information about the temperature, which is fundamentally limited by $(\Delta T/T)^2 \ge k_{B}/C$. Over the last decade, various theoretical and experimental techniques have been proposed in the field of quantum thermometry. Apart from thermalizing the probe system, the collisional model \cite{CICCARELLO20221, fiusa2025} has also been used to estimate the temperature of the environment, both in discrete and continuous variable models \cite{seah2019collisional, gabriel2024, mendonça2025}. Starting from single quantum dots to NV centres in nanodiamonds have been designed as precise fluorescent thermometers \citep{neumann2013high,yang2011quantum,kucsko2013nanometre,toyli2013fluorescence,seilmeier2014optical,haupt2014single}. Moreover, quantum harmonic oscillators \cite{brunelli2012qubit}, and Bose-Einstein condensate \cite{sabin2014impurities} have also been used to design quantum thermometers. 

Interestingly, despite substantial recent progress in multiparameter quantum estimation theory~\cite{goldberg2021intrinsic, kaubruegger2023optimal, szczykulska2016multi,ragy2016, Demkowicz-Dobrzański_2020, Gorecki2020optimalprobeserror, liu2020, francesco2022, Candeloro_2024,  Bressanini_2024, mukhopadhyay2025, Mihailescu_2024, gutiérrez2025}, existing approaches have focused almost exclusively on the estimation of a single temperature. Perhaps surprisingly, the development of protocols capable of estimating multiple temperatures remains largely unexplored. Here, we address this gap by developing a protocol for estimating multiple temperatures within a collisional model framework, leveraging tools from multiparameter phase estimation, where $\boldsymbol{\theta} = (\theta_1, \ldots, \theta_N)^{T}$ is encoded in a quantum state $\rho(\boldsymbol{\theta})$. Suitable measurement reveals that the lower bound in measuring the uncertainty is given by the quantum Cram\'er Rao bound (QCRB), i.e., $\mathrm{Cov}(\theta_1, \ldots, \theta_N)\ge \mathcal{F}_Q^{-1}$, where $\mathrm{Cov}(\theta_1, \ldots ,\theta_N)$ is the covariance matrix and $\mathcal{F}_{Q}$ is the Quantum Fisher information matrix (QFIM).

In this paper, we develop a protocol based on a collisional model to estimate the temperatures of multiple thermal reservoirs maintained at different values. We begin by establishing a necessary and sufficient condition to avoid singularities in the estimation of minimum covariance between two temperatures encoded in an ancillary qubit. Building on this result, we introduce into our protocol a controlled unitary rotation which is applied between successive collisions of the qubit ancillas with the probes. We quantify the resulting enhancement of the Fisher information compared to standard temperature-estimation schemes. To assess the performance of the protocol, we define two figures of merit tailored to this multi-parameter estimation scenario. Our analysis reveals that even without preserving quantum correlations among the ancillas, our protocol offers a significant advantage compared to the thermal Fisher informations of the baths. Moreover, by exploiting the quantum correlations of the ancillas, the advantage can be further enhanced. Finally, we address the situation when the number of estimable temperatures is more than two, and in this scenario we discuss the necessity of higher-dimensional ancillas in our proposed framework.\\
The paper is organized as follows. We start by describing our protocol for estimating multiple temperatures using repeated probe-ancilla collisions in Sec .~\ref {framework}. In Sec .~\ref {single_run}, we derive the necessary and sufficient condition for the singularity of the QFIM under certain conditions, and discuss the performance of the protocol for a single ancilla. Sec .~\ref {multiple_run} deals with the effect of uncorrelated and correlated ancillas to acquire information about the thermal baths. The role and necessity of the higher-dimensional ancilla are discussed in Sec .~\ref {higher_dim}. We propose a possible experimental platform in Sec .~\ref {experiment}. Finally, we draw our conclusions in Sec .~\ref {conclusion}.


\section{Joint temperature estimation protocol for multiple baths}
\label{framework}

Let us consider a collection of $N$ non-interacting baths or thermal environments denoted by $B_{1}$, $B_{2}$, \ldots, $B_{N}$ with temperatures $T_{1}$, $T_{2}$, \ldots, $T_{N}$, respectively. Let $\{S_{1}, \ldots,S_{N}\}$  be $N$ probe systems with local Hamiltonians $\{H_{S_1}, \ldots, H_{S_N}\}$ each kept at thermal equilibrium with its corresponding bath. The thermalized probes carry information about their respective bath temperatures $T_{i}$'s $(i = 1, \ldots, N)$. 
To estimate these temperatures, we propose a collisional model framework where these probes are made to interact or ``collide" according to a specific protocol with a stream of independent and identically prepared ancilla systems $\{A_{k}\}$ with $k = 1,2, \ldots n$. Following this, appropriate measurements are performed on the ancillas, which reveal the information about the temperatures. 

We consider a simple collision protocol between the ancillas and the baths. At first $A_1$ collides with $S_{1}$. During this interaction $A_1$ acquired information about the first bath. $A_1$ is then rotated through a unitary $U_{rot}(\theta,\hat r)$ for an angle $\theta$ around an axis $\hat r$. Simultaneously, the probe $S_{1}$ undergoes partial thermalization through a thermal map $\Lambda_{S_{1}}$ for a time $t_{S_{1}B_{1}}$ due to its coupling to the bath $B_{1}$ . This rotated $A_1$ then interacts with $S_{2}$ which is followed by another rotation $U_{rot}(\theta,\hat r)$. This process of successive collision and rotation continues till the last collision with $S_{N}$, and then the ancilla is collected for measurement. This whole process is then repeated, starting with ancilla $A_2$, followed by ancilla $A_3$, and so on (see Fig. \ref{schematic}). At last, separate or joint measurements are performed on the ancillas to estimate the encoded parameters, i.e., temperatures of the multiple baths.

Initially, the system probes are thermalized with the respective baths, and the initial state of the $i$-th system is $\rho_{S_{i}}^{th}$. The initial state of the $k$-th ancilla is $\rho_{A_{k}}$. The system-ancilla interaction is governed by the unitary evolution $U^{i}_{S_{i}A_{k}} = \exp(-i H_{int}^{S_{i}A_{k}}t^{i}_{S_{i}A_{k}})$ for a time interval $t^{i}_{S_{i}A_{k}}$ $(i = 1, \ldots, N)$. Here, $H_{int}^{S_{i}A_{k}}$ is the interaction Hamiltonian between the $i$-th system and $k$-th ancilla. The protocol follows three steps: first, the ancilla interacts with the first probe via $U^{1}_{S_{1}A_{k}}$, and the resultant ancilla state reads as,
\begin{align}
    \tilde{\rho}_{A_k}(1)&=\mathcal{E}_{S_1A_k}\big[\rho_{A_{k}}\big]\nonumber\\
    &= \mathrm{Tr}_{S_{1}} [(U^{1}_{S_{1}A_{k}}) \big(\Lambda_{S_1}^{k-1}[\rho_{S_{1}}^{th}]\otimes \rho_{A_{k}}\big) (U^{1}_{S_{1}A_{k}})^{\dagger}]
\end{align}

Here, $\Lambda_{S_{1}}^{k-1} [\rho_{S_{1}}^{th}]$ denotes the state of the first probe available to the $k$-th ancilla. Mathematically, $\Lambda_{S_{i}}$ is defined as $\Lambda_{S_{i}}[\rho_{S_{i}}] = \gamma (\bar{n} + 1) \mathcal{K}[\sigma_{-}^{S_{i}}] + \gamma \bar{n} \mathcal{K}[\sigma_{+}^{S_{i}}]$ in the interaction picture, with $\mathcal{K} (O) = O \rho_{S_{i}} O^{\dagger} - 1/2 \{ O O^{\dagger}, \rho_{S_{i}} \} $ and $\sigma_{\pm}^{S_{i}}$ being the raising and lowering operators acting on the probe $S_{i}$. $\gamma$ is the time-independent coupling constant, and $\bar{n}$ is the thermal occupation number. This is followed by the rotation of the ancillary system, given by
\begin{align*}
     \tilde{\tilde{\rho}}_{A_k}(1)=\mathcal{U}_{rot}\big[\tilde{\rho}_{A_k}(1)\big]= U_{rot}(\theta,\hat{r}) \,\, \tilde{\rho}_{A_k}(1)\,\, U_{rot}(\theta,\hat{r})^{\dagger}.
\end{align*}
\begin{figure}
\includegraphics[width=\linewidth]{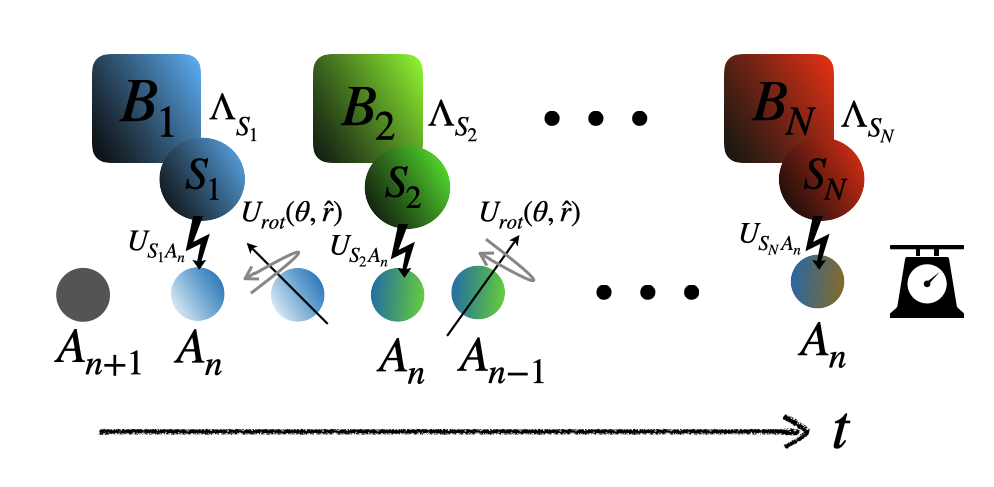}
\caption{Schematic representation of the multiparameter temperature estimation protocol based on the collisional model.
    The information about the first bath, $B_{1}$, is acquired by the $n$-th ancilla through its interaction with the probe system $S_{1}$. 
    A controlled unitary rotation is subsequently applied to the ancilla $A_{n}$, after which it interacts with the next probe to extract information about the second bath, $B_{2}$. This sequence is iteratively repeated for all $N$ thermal reservoirs. }
\label{schematic}
\end{figure}
At last, the resultant ancilla interacts with the next bath $S_{2}$, and the process continues. The resultant ancilla after $N$ successive collisions and rotations looks like
\begin{align}
    \rho^{f}_{A_k}(N)= \mathcal{U}_{rot}\circ \mathcal{E}_{S_NA_k}\circ ...\circ \mathcal{U}_{rot}\circ \mathcal{E}_{S_1A_k}\big[\rho_{A_{k}}\big].
\end{align}

Here, $\rho^{f}_{A_k}(N)$ denotes the final state of the $k$-th ancilla. The protocol now continues for $n$ ancillas, and the final joint state of the ancillas, $\rho_{A_{1}\ldots A_{n}}$, is measured via a positive operator valued measurement (POVM) $\{\Pi_{j}\}_{j}$ ($j$ is the possible outcome), which provides the Fisher information matrix $F(N)$. Maximizing over all such POVMs provides the QFIM $\mathcal{F}^{(N)}_{Q}$, whose $(i,j)$ element is defined by, $(\mathcal{F}_Q)_{ij} = \frac{1}{2}\mathrm{Tr}[\rho^{f}_{A_{1} \ldots A_{n}} (L_iL_j+L_jL_i)]$. Note that $L_{\mu}$ is the symmetric logarithmic derivative (SLD) corresponding to the temperature $T_{\mu}$ given by $\tfrac{\partial \rho^f_{A_1, \ldots, A_n}}{\partial T_{\mu}}=\tfrac{1}{2}[L_{\mu}\rho^f_{A_1, \ldots, A_n}+\rho^f_{A_1,.., A_n}L_{\mu}]$.\\

\textit{Performance indicator: } Here we consider two figures of merit to compare our protocol with the scenario when the temperature of the individual baths is estimated separately using their respective thermal Fisher informations $\mathcal{F}_{th}^{i}$. We compare the efficiency of our protocol by constructing a $N \times N$ thermal Fisher information matrix $\mathcal{F}_{th}$ for $N$ baths.

Suppose a probe is in thermal contact with a bath having inverse temperature $\beta = 1/k_{B}T$, where $k_{B}$ is the Boltzmann constant. The completely thermalized state of the system is $\rho_{S_{i}}^{th} = \exp(-\beta H_{S_{i}})/{\exp(-\beta H_{S_{i}})}$, where $H_{S_i}$ is the local Hamiltonian of the probe. Then, the thermal Fisher information acquired from the system is given by
\begin{equation}
    F_{th}^{i} = \frac{\langle H_{S_i}^{2} \rangle - \langle H_{S_i} \rangle^2}{k_{B}^2T^{4}}.
    \label{thermal_fisher}
\end{equation}

  where $\langle \bullet \rangle$ denotes the average value of an operator. Using this, we construct the thermal Fisher information matrix $\mathcal{F}_{th}$ as $\mathcal{F}_{th} = diag(F_{th}^{1}, F_{th}^{2}, \ldots F_{th}^{N})$. $F_{th}^{i}$ is the thermal Fisher information for the $i$-th probe. In this scenario, the total input state is $\rho_{th} = \otimes_{i=1}^{N}\rho_{S_i}^{th}$ and $\rho^{th}_{S_i}$ only contains the information about the $i$-th bath. Note that the construction of $\mathcal{F}_{th}$ has been designed to serve as a benchmark for our proposed protocol.  We compare these two Fisher information matrices $\mathcal{F}_Q$ and $\mathcal{F}_{th}$ and \textit{total average information} as, 
    \begin{equation}
        \eta_{\mathrm{joint}} = \frac{\mathrm{Tr}(\mathcal{F}_{Q})}{\mathrm{Tr}(\mathcal{F}_{\mathrm{th}})}.
    \end{equation}

Here, $\eta_{\mathrm{joint}} < 1 (>1)$ indicates that the joint information contained in $\mathcal{F}_Q$ is smaller (larger) than $\mathcal{F}_{th}$. Note that $\eta_{joint}$ does not consider the effect of the correlation between different parameters. However, we also define another performance indicator, \textit{information accuracy,} 
\begin{equation}
\eta_{\mathrm{acc}} = \log \bigg[\frac{\det(\mathcal{F}_Q)}{\det(\mathcal{F}_{\mathrm{th}})}\bigg],
\end{equation}
that captures the effect of correlation among the parameters, and also ensures the minimum error condition. The positive value of $\eta_{\mathrm{acc}}$ indicates that the lower bound of error in the joint estimation protocol is less than $\mathcal{F}_{th}$. On the other hand, a negative value implies that the estimation protocol has a large lower bound in the error estimation. In the rest of the manuscript, we will use these two indicators to compare the effectiveness of our proposed protocol.




\section{Single-run temperature estimation}
\label{single_run}

We begin by considering the simplest model, consisting of two probes, each of which is fully thermalized with its respective bath. A single ancilla interacts with these probes sequentially to acquire information about the bath parameters. In multiparameter estimation, one of the central difficulties arises from the singularity of the QFIM ($\mathcal{F}_{Q}$), i.e., $\mathcal{F}_{Q}^{-1}$ does not exist. This leads to an unbounded lower bound on the estimation uncertainty. This divergence reflects an intrinsic inefficiency in the design of the protocol. Such singularities can occur for various reasons~\cite{yang2025}, one of which is that the parameters of interest may become functionally dependent. Consequently, an efficient multiparameter estimation scheme must explicitly account for this issue.
In our proposed protocol, a controlled unitary rotation introduced between the two encoding stages eliminates these parameter dependencies. In this regard, we now proceed to prove the following theorem:

\begin{figure}
\includegraphics[width=\linewidth]{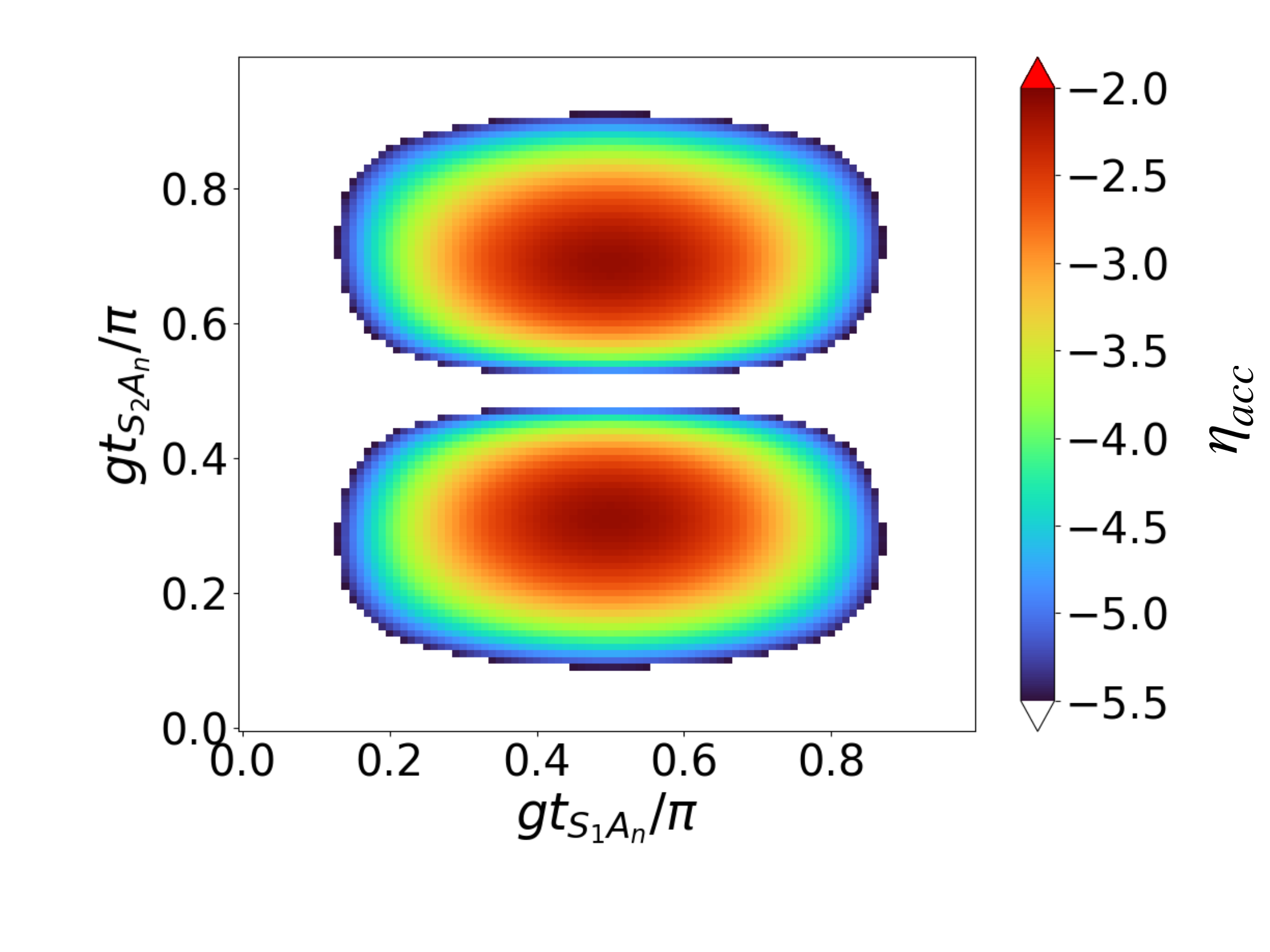}

\caption{The variation of $\eta_{\mathrm{acc}}$ is illustrated for the case where a single ancilla sequentially interacts with the first and second probes, with interaction strengths $gt_{S_{1}A_{n}}/\pi$ and $gt_{S_{2}A_{n}}/\pi$, respectively. Here, $K_{B}T_{B_{1}}/\hbar\omega = 2$, and $K_{B}T_{B_{2}}/\hbar\omega = 1$.
 }
 \label{2baths_one_anc}
\end{figure}

\begin{theorem}\label{theorem1}
    For any bi-parameterized qubit state, $\eta(\xi_1,\xi_2)$ the QFIM $\mathcal{F}_Q(\eta)$ is non-invertible, i.e., $\text{det}\big(\mathcal{F}_Q(\eta)\big) = 0$  if and only if
    \begin{equation}\label{theorem_condition}
        \frac{\partial \eta}{\partial \xi_1} = c \frac{\partial \eta}{\partial \xi_2}
    \end{equation}
    for some real number $c$.
\end{theorem}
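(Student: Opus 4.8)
The plan is to recognize the QFIM as a Gram matrix and reduce the singularity condition to the equality case of the Cauchy--Schwarz inequality. First I would write the qubit state in the Bloch representation $\eta = \tfrac{1}{2}(\mathbb{I} + \vec r\cdot\vec\sigma)$ with Bloch vector $\vec r = \vec r(\xi_1,\xi_2)$, so that $\partial_{\xi_\mu}\eta = \tfrac12 (\partial_{\xi_\mu}\vec r)\cdot\vec\sigma$. Because the Pauli matrices are linearly independent over $\mathbb{R}$, the proportionality $\partial_{\xi_1}\eta = c\,\partial_{\xi_2}\eta$ is equivalent to the same statement for the Bloch-vector derivatives, $\partial_{\xi_1}\vec r = c\,\partial_{\xi_2}\vec r$, i.e.\ to linear dependence of the two tangent vectors $u_\mu := \partial_{\xi_\mu}\vec r \in \mathbb{R}^3$. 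This reduces the whole theorem to a statement about three-dimensional real vectors.

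Next I would invoke the closed-form expression for the single-qubit QFIM in Bloch coordinates, valid for a full-rank state ($|\vec r| < 1$),
\begin{equation}
(\mathcal{F}_Q)_{\mu\nu} = u_\mu\cdot u_\nu + \frac{(\vec r\cdot u_\mu)(\vec r\cdot u_\nu)}{1 - |\vec r|^2}.
\end{equation}
The key observation is that the right-hand side is exactly $\langle u_\mu, u_\nu\rangle$ for the bilinear form $\langle a,b\rangle := a\cdot b + (\vec r\cdot a)(\vec r\cdot b)/(1-|\vec r|^2)$ on $\mathbb{R}^3$. I would then verify that this form is positive definite for $|\vec r| < 1$: since $\langle a,a\rangle = |a|^2 + (\vec r\cdot a)^2/(1-|\vec r|^2) \ge |a|^2$, it vanishes only at $a = 0$. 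Hence $\mathcal{F}_Q$ is the $2\times2$ Gram matrix of $u_1,u_2$ with respect to a genuine inner product, and
\begin{equation}
\det \mathcal{F}_Q = \langle u_1,u_1\rangle\langle u_2,u_2\rangle - \langle u_1,u_2\rangle^2 \ge 0,
\end{equation}
with equality, by the equality case of Cauchy--Schwarz, if and only if $u_1$ and $u_2$ are linearly dependent --- equivalently $\partial_{\xi_1}\vec r = c\,\partial_{\xi_2}\vec r$, which is the claimed condition \eqref{theorem_condition}. This gives both implications at once.

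The main obstacle I anticipate is the boundary case of a pure state, $|\vec r| = 1$, where the displayed formula is singular and the inner product degenerates. I would handle it separately: purity forces $|\vec r|^2 \equiv 1$ along the parameter manifold, so $\vec r\cdot u_\mu = 0$ and the QFIM collapses to $(\mathcal{F}_Q)_{\mu\nu} = u_\mu\cdot u_\nu$, the ordinary Euclidean Gram matrix on the tangent plane of the Bloch sphere; the Cauchy--Schwarz argument then applies verbatim with the standard dot product, giving the same conclusion. A second minor point is the degenerate case $\partial_{\xi_2}\eta = 0$ (a genuinely inactive parameter), which makes a row of $\mathcal{F}_Q$ vanish, so $\det\mathcal{F}_Q = 0$ holds while the proportionality is read in the direction $\partial_{\xi_2}\eta = c'\,\partial_{\xi_1}\eta$; it is subsumed under linear dependence. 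Everything else is a routine consequence of the Gram-matrix structure, so the conceptual weight of the proof rests entirely on identifying the correct positive-definite metric and applying Cauchy--Schwarz.
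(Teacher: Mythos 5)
Your proof is correct, but it takes a genuinely different route from the paper's. The paper argues spectrally: writing $\eta=\alpha_0\ket{\alpha_0}\bra{\alpha_0}+\alpha_1\ket{\alpha_1}\bra{\alpha_1}$, it expresses the SLDs via Eq.~(\ref{sld_defn}), packages their rows into the $2\times 2$ matrices $W_0,W_1$ so that $\mathcal{F}_Q=\tfrac12\alpha_0(W_0W_0^{\dagger}+\mathrm{c.c.})+\tfrac12\alpha_1(W_1W_1^{\dagger}+\mathrm{c.c.})$, and then runs a rank argument: condition \eqref{theorem_condition} forces each $W_k$ to have proportional rows, collapsing $\mathcal{F}_Q$ onto the rank-one projector $\begin{bmatrix} c \\ 1\end{bmatrix}\begin{bmatrix} c & 1\end{bmatrix}$, whereas its violation makes some $W_k$, and hence $\mathcal{F}_Q$, rank two. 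You instead pass to the Bloch picture, invoke the closed-form qubit QFIM $(\mathcal{F}_Q)_{\mu\nu}=u_\mu\cdot u_\nu+(\vec r\cdot u_\mu)(\vec r\cdot u_\nu)/(1-|\vec r|^2)$ with $u_\mu=\partial_{\xi_\mu}\vec r$, identify it as the Gram matrix of the tangent vectors under a positive-definite form, and settle both directions at once through the equality case of Cauchy--Schwarz. Each approach buys something. The paper's proof is self-contained, needing only the SLD definition; yours imports the Bloch-coordinate formula, which should be cited (it appears in the paper's own reference \cite{liu2020}) or rederived, but in exchange it yields $\det\mathcal{F}_Q\ge 0$ for free, makes the geometric content transparent (singularity means the two tangent vectors of the Bloch trajectory are parallel), and is more careful at the edges. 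In particular, you correctly observe that the invariant statement is \emph{linear dependence} of $\partial_{\xi_1}\eta$ and $\partial_{\xi_2}\eta$: when $\partial_{\xi_2}\eta=0\neq\partial_{\xi_1}\eta$ one still has $\det\mathcal{F}_Q=0$ even though \eqref{theorem_condition} as literally written fails for every $c$; the paper's dismissal of the ``trivial case'' with $c=0$ covers only $\partial_{\xi_1}\eta=0$, so your reformulation quietly repairs a small imprecision in the theorem's phrasing. Likewise, your separate treatment of pure states addresses a regime the paper's proof silently excludes, since Eq.~(\ref{sld_defn}) divides by $\alpha_j+\alpha_k$ and thus presumes $\eta$ full rank (as the appendix's caveat ``as long as $\rho^f_{A_1}$ is full rank'' implicitly acknowledges); the only residual pedantry is that purity at a single point does not force $|\vec r|\equiv 1$ on the whole parameter manifold, and at such rank-changing points the QFIM itself is delicate --- but there $\vec r\cdot u_\mu=0$ still holds and your Euclidean Gram argument goes through under the standard convention, so this does not affect the result.
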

\begin{proof}
    It can be easily seen that the theorem holds true for the trivial case $\frac{\partial\eta}{\partial\xi_1}=0$ or $\frac{\partial\eta}{\partial\xi_2}=0$ by choosing $c=0$. We therefore consider the non-trivial scenario where neither of the derivatives is zero. It ensures that the SLD operators $L_i$'s are non-zero and $\mathcal{F}_{Q}$ is either rank $1$ or $2$. Using the spectral decomposition $\eta=\alpha_0\ket{\alpha_0}\bra{\alpha_0}+\alpha_1\ket{\alpha_1}\bra{\alpha_1}$ the QFIM can be written as
\begin{align}
    \mathcal{F}_{Q}(\eta)= \tfrac{1}{2}\alpha_0 \big(M_0+M_0^*\big)+\tfrac{1}{2}\alpha_1 \big(M_1+M_1^*\big)
\end{align}
where $M^*$ denotes complex conjugate of a matrix $M$, and, $M_0$ and $M_1$ are positive matrices given by $M_0=W_0W_0^{\dagger}$ and $M_1=W_1W_1^{\dagger}$, where
\begin{align}
    W_0=\begin{pmatrix}
        \bra{\alpha_0}L_1 \\ \bra{\alpha_0}L_2
    \end{pmatrix}\,\,\,;\,\,\,
    W_1=\begin{pmatrix}
        \bra{\alpha_1}L_1 \\ \bra{\alpha_1}L_2
    \end{pmatrix}
\end{align}
Note that the SLD operators can be expressed as
\begin{align}\label{sld_defn}
    L_i= \sum_{j,k=0,1}\frac{2\bra{\alpha_j}\frac{\partial \eta}{\partial T_i}\ket{\alpha_k}}{\alpha_j+\alpha_k} \ket{\alpha_j}\bra{\alpha_k}
\end{align}
Also note that the rank of any $M^*$ is always equal to that of $M$. It can be easily seen from Eq. (\ref{sld_defn}) that condition (\ref{theorem_condition}) is equivalent to
\begin{align}
    \bra{\alpha_0}L_1\ket{\alpha_0} &= c\,\bra{\alpha_0}L_2\ket{\alpha_0} \label{eq_condition_1}\\
    \bra{\alpha_0}L_1\ket{\alpha_1} &= c\,\bra{\alpha_0}L_2\ket{\alpha_1} \label{eq_condition_2}\\
    \bra{\alpha_1}L_1\ket{\alpha_1} &= c\,\bra{\alpha_1}L_2\ket{\alpha_1} \label{eq_condition_3}
\end{align}
Therefore, if condition (\ref{theorem_condition}) is true then from Eqs. (\ref{eq_condition_1}), (\ref{eq_condition_2}) and (\ref{eq_condition_3})  we can conclude that $\bra{\alpha_0}L_1=c\bra{\alpha_0}L_2$ and $\bra{\alpha_1}L_1=c\bra{\alpha_1}L_2$. This implies
\begin{align}
    W_0=\begin{pmatrix}
        c\bra{\alpha_0}L_2 \\ \bra{\alpha_0}L_2
    \end{pmatrix}\,\,;\,\,W_1=\begin{pmatrix}
        c\bra{\alpha_1}L_2 \\ \bra{\alpha_1}L_2
    \end{pmatrix}
\end{align}
As a result, we get
\begin{equation*}
    \mathcal{F}_Q(\eta)=\big(\alpha_0\bra{\alpha_0}L_2^2\ket{\alpha_0}+\alpha_1\bra{\alpha_1}L_2^2\ket{\alpha_1}\big)\begin{bmatrix}
        c \\ 1
    \end{bmatrix}\begin{bmatrix}
        c & 1
    \end{bmatrix}
\end{equation*}
which clearly implies $\mathcal{F}_Q(\eta)$ is rank $1$ and as a result $\text{det}(\mathcal{F}_Q(\eta)) = 0$. This proves the first part of the theorem.

We now show if condition (\ref{theorem_condition}) is violated, $\mathcal{F}_{Q}$ is rank $2$ and consequently $\text{det}(\mathcal{F}_Q(\eta))\neq 0$. From the equivalence of condition (\ref{theorem_condition}) and  Eqns. (\ref{eq_condition_1}), (\ref{eq_condition_2}) and (\ref{eq_condition_3}) we can conclude that if condition (\ref{theorem_condition}) is violated at least one of the following inequalities are true

\begin{align}
    \frac{\bra{\alpha_0}L_1\ket{\alpha_0}}{\bra{\alpha_0}L_2\ket{\alpha_0}} &\neq \frac{\bra{\alpha_0}L_1\ket{\alpha_1}}{\bra{\alpha_0}L_2\ket{\alpha_1}}\label{violation_0} \\
    \frac{\bra{\alpha_1}L_1\ket{\alpha_0}}{\bra{\alpha_1}L_2\ket{\alpha_0}} &\neq \frac{\bra{\alpha_1}L_1\ket{\alpha_1}}{\bra{\alpha_1}L_2\ket{\alpha_1}}\label{violation_1}
\end{align}
Inequality (\ref{violation_0}) implies $W_0$ is rank $2$, and, inequality (\ref{violation_1}) implies $W_1$ is rank $2$. Thus at least one of $W_0$ or $W_1$ is rank $2$, which in turn means $\mathcal{F}_{Q}$ is rank $2$.
\end{proof}

Suppose the local Hamiltonian of the two baths and the ancilla are $H_{S_{i}} = \omega_{S_{i}} \sigma_{z}/2$ ($i = 1,2$) and, $H_{A_{1}} = \omega_{A_{1}} \sigma_{z}/2$ respectively. The interaction Hamiltonian between the system and ancilla is given by, $H_{S_{i}A_{1}}=g_{i}(\sigma_{+}\otimes \sigma_{-}+\sigma_{-}\otimes \sigma_{+})$. Let us consider the initial state of the ancilla to be $\rho_{A_{1}}$, and the thermalized probe is in a state $\rho^{th}_{S_{i}} =\sum_{k}\lambda_k(T_{i})\ket{k}\bra{k}$ which are thermalized with the corresponding bath having temperature $T_{i}$. The action of the collision is given by
\begin{align*}
    \mathcal{E}_{T_{1}}[\rho_{A_{1}}]=Tr_{S_{1}}\big[U_{S_{1}A_{1}}(\rho_{S_{1}}^{th}(T_{1})\otimes\rho_{A_{1}})U_{S_{1}A_{1}}^{\dagger}\big]
\end{align*}


Now for two consecutive collisions of the ancilla with the first and the second bath, we have $\hat{\hat{\mathcal{E}}}_{f} =  \hat{\hat{\mathcal{E}}}_{T_2}\cdot\hat{\hat{\mathcal{E}}}_{T_1}$. Note that $\hat{\hat{\bullet}}$ indicates the vectorized form of the operators. For the sake of simplicity, we take the interaction strengths to be identical, i.e., $g_{1} = g_{2}$ and $t_{S_{1}A_{1}} = t_{S_{2}A_{1}} = t$. Now, if the initial ancilla state is $\rho_{A_{1}}=\ket{1}\bra{1}$, we get $\hat{\hat{\mathcal{E}_{f}[\rho_{A_{1}}]}}=\hat{\hat{\Lambda}}_f\cdot \hat{\hat{\rho_{A_{1}}}}$ (for detailed calculation, see Appendix. \ref{appendix_u_rot}), i.e., 
\begin{align}
    \rho^f_{A_{1}}=\mathcal{E}_f[\rho_{A_{1}}] = \begin{bmatrix}
        v(T_1,T_2) & 0 \\
        0 & 1-v(T_1,T_2)
    \end{bmatrix},
\end{align}
where $v(T_1,T_2) = \sin^2 gt \Big[p+q \cos^2 gt \Big]$ with $p=\lambda_0(T_1)$ for $\hat{\hat{\mathcal{E}}}_{T_1}$ and $q=\lambda_0(T_2)$ for $\hat{\hat{\mathcal{E}}}_{T_2}$.

It can be easily seen that $\tfrac{\partial \rho^{f}_{A_{1}}}{\partial T_1}=\big[\big(\tfrac{\partial v}{\partial T_1}\big)/\big(\tfrac{\partial v}{\partial T_2}\big)\big]\,\tfrac{\partial \rho^{f}_{A_{1}}}{\partial T_2}$. Thus, using Theorem \ref{theorem1}, we must have $\text{det}(\mathcal{F}_{Q}(\rho^{f}_{A_{1}}))=0$. Moreover, the exact form of the QFIM in this case is given by
\begin{align}
    \mathcal{F}_{Q}(\rho^f_{A_1})=\frac{1}{v(1-v)}\begin{bmatrix}
        (\partial_1 v)^2 & (\partial_1 v)(\partial_2 v) \\
        (\partial_2 v)(\partial_1 v) & (\partial_2 v)^2
    \end{bmatrix} 
\end{align}
As can be easily seen here, $\text{det}(\mathcal{F}_{Q})=0$, which implies that the protocol automatically makes the two parameters dependent. In order to avoid this issue, we introduce a rotation $U_{rot}(\pi/4,\hat{x}) = \mathrm{exp}(-i\pi\sigma_{x}/4)$ between two successive collisions as shown in Fig. \ref{schematic}. 
Thus, $\hat{\hat{\mathcal{E}}}_f$ is given by
\begin{align}
    &\hat{\hat{\mathcal{E}}}_{f} =  \hat{\hat{\mathcal{E}}}_{T_2}\cdot\hat{\hat{\mathcal{U}}}_{rot}\cdot\hat{\hat{\mathcal{E}}}_{T_1} 
\end{align}

One can easily see that $\rho^{f}_{A_{1}}=\mathcal{E}_f[\rho_{A_{1}}]=\hat{\hat{\mathcal{E}}}_f\cdot\hat{\hat{\rho_{A_{1}}}}$ is in the form
\begin{equation}
    \rho^{f}_{A_{1}}=\begin{bmatrix}
    \mu (q) & -i\chi(p) \\
    i\chi(p) & 1-\mu(q)
    \end{bmatrix}
\end{equation}
where $\mu(q) = \tfrac{1}{4}\!\big[(1 + 2q) + (1 - 2q)\cos 2g\big]$ and $\chi(p) = \tfrac{1}{2} \big[(1 - p) + p\cos 2g\big]\cos g $. 
Now, $\tfrac{\partial}{\partial T_1}\big(\rho^{f}_{A_{1}}\big) =\dot{\chi}\sigma_y$ and $ \tfrac{\partial}{\partial T_2}\big(\rho^{f}_{A_{1}}\big) =\dot{\mu}\sigma_z$,
where $\dot{\mu}=d\mu/ d T_2$ and $\dot{\chi}=d\chi/ d T_1$. Using Theorem \ref{theorem1} we can conclude $\text{det}(\mathcal{F}_{Q})\neq 0$.


Since in this case $[L_{1}, L_{2}] \neq 0$ (see Appendix \ref{appendix_u_rot}), the two temperatures cannot, in general, be estimated simultaneously. Nevertheless, within the framework of our protocol, one can appropriately tune the free parameters and reach an estimation strategy that is closest to the simultaneous measurement scenario. In particular, the achievable Fisher information for both baths can be made comparable to their respective thermal Fisher information.
For a single-collision protocol, one typically expects the information-accuracy $\eta_{\mathrm{acc}}$ to be negative, indicating that the lower bound on temperature estimation in our scheme is higher than the bound set by $\mathcal{F}_{\mathrm{th}} = \mathrm{diag}(0.015,0.197)$ (from Eq. \ref{thermal_fisher}). However, as shown in Fig.~\ref{2baths_one_anc}, there exist certain collision strengths even within a single-run scenario for which $\eta_{\mathrm{acc}}$ gets closer to zero, although still being negative. We observe that when $g t_{S_{1}A_{n}}/\pi = 0.5$, the protocol attains the maximum value of $\eta_{\mathrm{acc}}$, or in other words, the minimum estimation uncertainty predicted by the QCRB. Note that this value makes the interaction unitary between the ancilla and the first bath to be a SWAP operation with an imaginary phase. On the other hand, the total information jointly accessible about the two parameters does not depend on the specific nature of the first collision. However, to achieve the maximum value of $\eta_{\mathrm{joint}}$, the second interaction must effectively implement a phased-SWAP operation. Typically, to obtain the maximum joint information about the parameters, one should fix the interaction strength so that a phased-SWAP operation is performed on both probes. However,  the estimation uncertainty becomes considerably larger at this interaction strength, indicating a clear trade-off between $\eta_{\mathrm{joint}}$ and $\eta_{\mathrm{acc}}$. Taking this into account, we exclude this regime from our analysis. In the remainder of the manuscript, we fix the interaction strength between the ancilla and the first bath such that this phased-SWAP operation is reproduced.

\begin{figure}
\includegraphics[width=\linewidth]{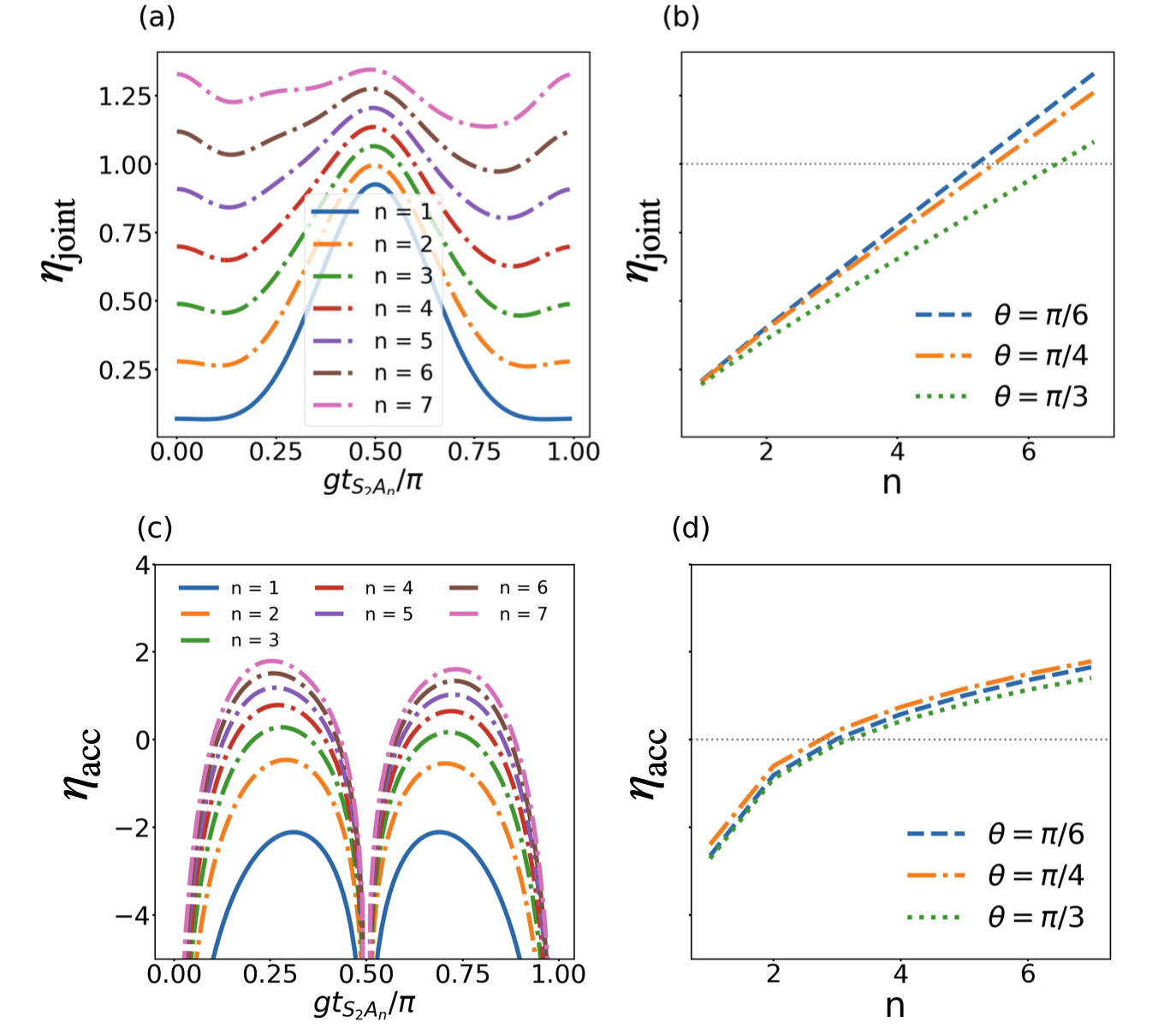}

\caption{The trend of $\eta_{\mathrm{joint}}$ and $\eta_{\mathrm{acc}}$ with respect to coupling strength for different numbers of uncorrelated ancillas $n$ ($(a)$ and $(c)$). Here $gt_{S_{1}A_{n}}/\pi = 0.5$ and $\gamma t_{S_{i}B_{i}} = 0.5$ for $i = 1,2$. $(b)$ and $(d)$ denotes the scaling of $\eta_{\mathrm{joint}}$, and $\eta_{\mathrm{acc}}$ for different angles of rotation around the $x$-axis. The dotted horizontal lines in $(b)$ and $(d)$ indicate $\eta_{\mathrm{joint}} = 1$ and $\eta_{\mathrm{acc}} = 0$ respectively i.e., $\mathcal{F}_{Q} = \mathcal{F}_{th}$.
All other details are the same as Fig. \ref{2baths_one_anc}.} 
\label{2baths_multi_anc}
\end{figure}
\section{Information acquired through multiple collisions} 
\label{multiple_run}
Previous analysis based on a single-collision setup shows that, for specific values of the interaction strength, it becomes possible to extract information about the parameters of interest within well-defined error bounds. Let us now discuss whether it is possible to gain more information by exploiting multiple ancillary systems.  Here we consider two specific scenarios: one, performing a joint measurement on the resultant product ancilla state of the $n$ qubits (where $n$ is the total number of ancilla) following the protocols as described in Sec. \ref{framework}. Secondly, we analyze the situation when the ancillary systems get correlated due to the successive interaction with the same probe systems.

\textit{Effect of uncorrelated multiple ancilla:} A systematic analysis reveals that employing multiple ancillary systems provides a significant advantage within the proposed protocol. Even in the absence of correlations among the ancillas, performing a global measurement on their joint state enables the extraction of more information than in the single-collision scenario. As illustrated in Fig.~\ref{2baths_multi_anc}(a), when the number of ancillas exceeds three, the performance of our protocol surpasses the joint information contained in $\mathcal{F}_{\mathrm{th}}$, i.e., $\eta_{\mathrm{joint}} > 1$. In this regime, a finite error bound is also achievable, as shown in Fig.~\ref{2baths_multi_anc}(c). Intuitively, after each interaction with an ancilla, the individual probe systems are allowed to partially thermalize with their respective baths. As a result, when the next ancilla interacts with the probe, the probe has already acquired additional information from its corresponding bath. This repeated interaction of the probe with the bath between successive ancilla collisions provides an advantage over the single-collision model, even in the absence of any correlations between the ancillas.

Moreover, after more than three collisions, the maximum numerical value of $\eta_{\mathrm{acc}}$ becomes positive, indicating that the estimation error in our protocol is lower than the corresponding thermal Fisher-information bound. The scaling of $\eta_{\mathrm{joint}}$ with respect to the number of ancillas, presented in Fig.~\ref{2baths_multi_anc}(b), demonstrates that it increases monotonically; hence, larger numbers of ancillas yield progressively more joint information about the bath temperatures. Nevertheless, Fig.~\ref{2baths_multi_anc}(d) suggests that the error eventually saturates as the number of collisions grows. Despite this saturation, the error bound remains considerably tighter than in the thermal scenario, since $\eta_{\mathrm{acc}} > 0$ even after three consecutive collisions.

It is also worth noting that each ancilla undergoes the same rotation by an angle $\theta$. While $\theta = \pi/6$ yields slightly higher joint information, the error bound is minimized for $\theta = \pi/4$, where $\eta_{\mathrm{acc}}$ reaches its maximum. Consequently, in our subsequent analysis of the influence of correlations between ancillas, we focus exclusively on the case $\theta = \pi/4$.

\textit{Information gain from ancilla correlation:} Up to this point, we have examined the scenario in which the ancillas, despite undergoing repeated interactions with the same probe system, remain uncorrelated under our proposed protocol. We now turn to the case where the ancillas are correlated, in which the final state of the ancillary systems, from which the information about the two baths ($N = 2$) must be extracted, is given by
\begin{align}
\rho_{A_{1}\ldots A_{n}} 
    &= \mathrm{Tr}_{S_{1}S_{2}} \Big[
        \Lambda_{S_{2}} \circ \mathcal{U}^{A_n}_{rot} 
        \circ \mathcal{U}_{S_{2}A_{n}} \circ \cdots \nonumber\\
    &\quad
        \cdots \circ \Lambda_{S_{2}} \circ \mathcal{U}^{A_2}_{rot} 
        \circ \mathcal{U}_{S_{2}A_{1}} \nonumber\\
    &\quad
        \circ \Lambda_{S_{1}} \circ \mathcal{U}^{A_1}_{rot} 
        \circ \mathcal{U}_{S_{1}A_{1}}
        \big( \rho^{S_{1}}_{th} \otimes \rho^{S_{2}}_{th} 
        \otimes \rho_{A_{k}}^{\otimes_{k=1}^n} \big)
        \Big].
\end{align}
Here $\mathcal{U}_{rot}^{A_{i}}$ denotes the rotation is applied in the $A_{i}$-th ancilla. Performing suitable measurements on the joint ancilla state $\rho_{A_{1}\ldots A_{n}}$ reveals that for $n = 2$, the correlated and uncorrelated scenarios provide almost identical values of the joint information and accuracy bound (see Fig. \ref{corr_2_bath}). However, when the number of ancillary systems is increased, the correlations established among them provide a significant advantage over both the uncorrelated case and, naturally, the thermal reference scenario. Moreover,
\begin{figure}[H]
\includegraphics[width= \linewidth]{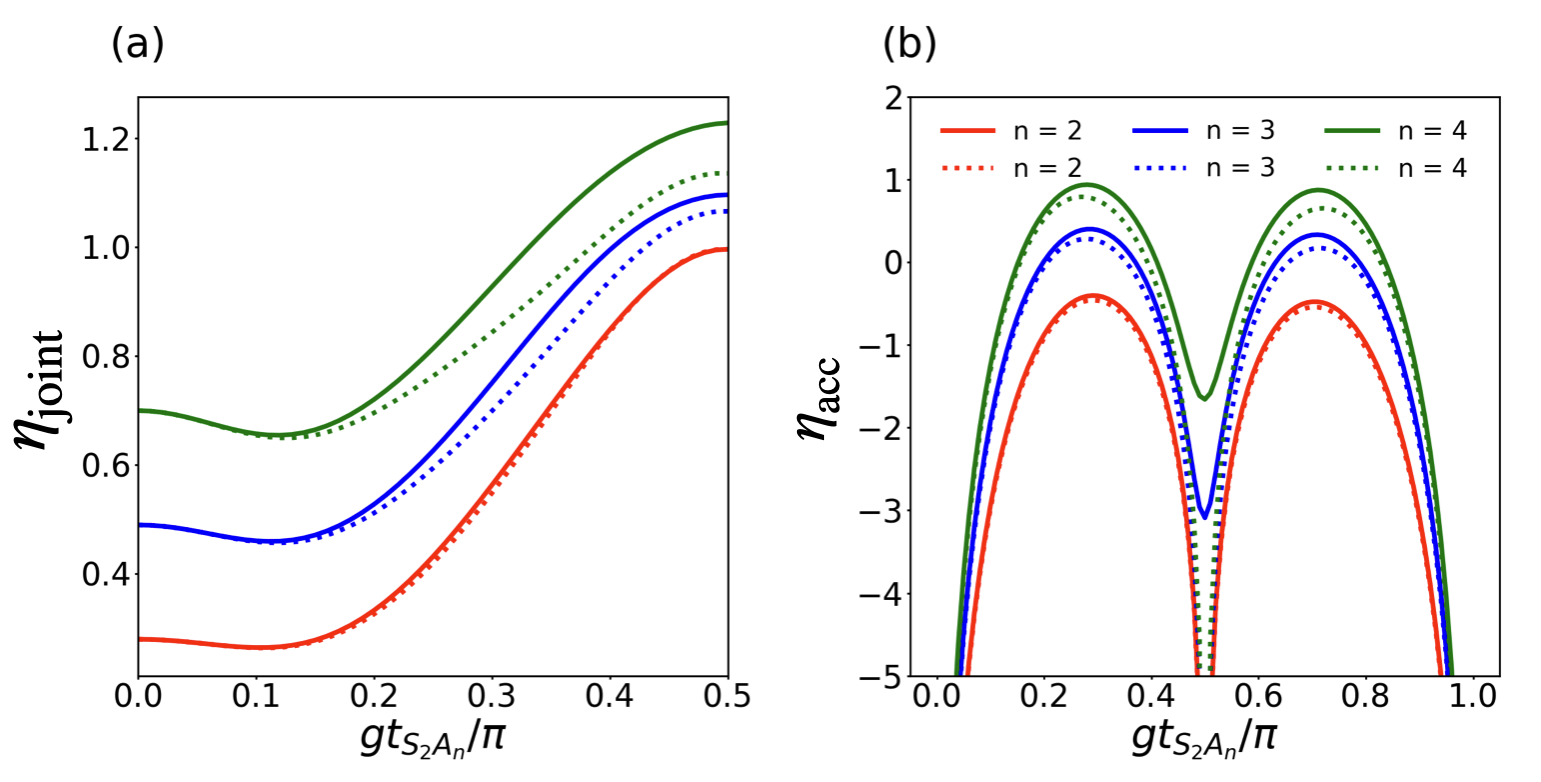}
\caption{ Comparison between correlated (solid lines) and uncorrelated ancilla (dotted lines) for (a) $\eta_{\mathrm{joint}}$ and (b) $\eta_{\mathrm{acc}}$ with respect to the interaction strength. Here, $g t_{S_{1}A_{n}} = 0.5$. All other details are similar to Fig. \ref{2baths_multi_anc}.}
\label{corr_2_bath}
\end{figure}
$\eta_{\mathrm{acc}}$ increases substantially, demonstrating that our proposed protocol is highly efficient in accurately inferring the temperatures of the two baths.

\section{Role of higher-dimensional ancilla in the estimation protocol} 
\label{higher_dim}

In this section, we examine the problem of temperature estimation involving three thermal baths, corresponding to \( N = 3 \). We begin by considering two-dimensional ancillary systems employed to encode information about the baths. Since a qubit possesses three degrees of freedom, it is, in principle, capable of encoding three independent parameters \cite{liu2020}. Nevertheless, it has been shown that under unitary encoding, the simultaneous estimation of multiple parameters becomes infeasible when the number of parameters exceeds the dimension of the probe, as the Fisher information matrix turns singular. Consequently, no finite bound on the estimation precision can be established \cite{Candeloro_2024}. In our proposed protocol, although the encoding process is modelled as a general completely positive trace-preserving (CPTP) map rather than a strictly unitary evolution, we observe a similar limitation. While the qubit can, in principle, carry information about three distinct temperatures, the parameter \(\eta_{\mathrm{acc}}\) becomes significantly negative, indicating that the achievable error bound within this scheme is highly unsatisfactory.

To avoid this hurdle, we consider a collisional model consisting of identically prepared qutrit systems to encode and decode the information of the baths. To make the analysis coherent, we consider an energy-conserving interaction Hamiltonian between the $i$-th qubit probe and the $n$-th qutrit ancilla with the identical interaction strength, given by, $H_{S_{i}A_{n}} = g(\sigma_{+}\mathcal{Q}_{-} + \sigma_{-}\mathcal{Q_{+}})$. Here, $\mathcal{Q}_{\pm} = (S_{x} \pm iS_{y})/2$ with $S_{\nu}$ ($\nu = x,y,$ and $z$) denote the higher-dimensional Pauli operators and $g$ is the interaction strength.\\

\begin{figure}
\includegraphics[width= \linewidth]{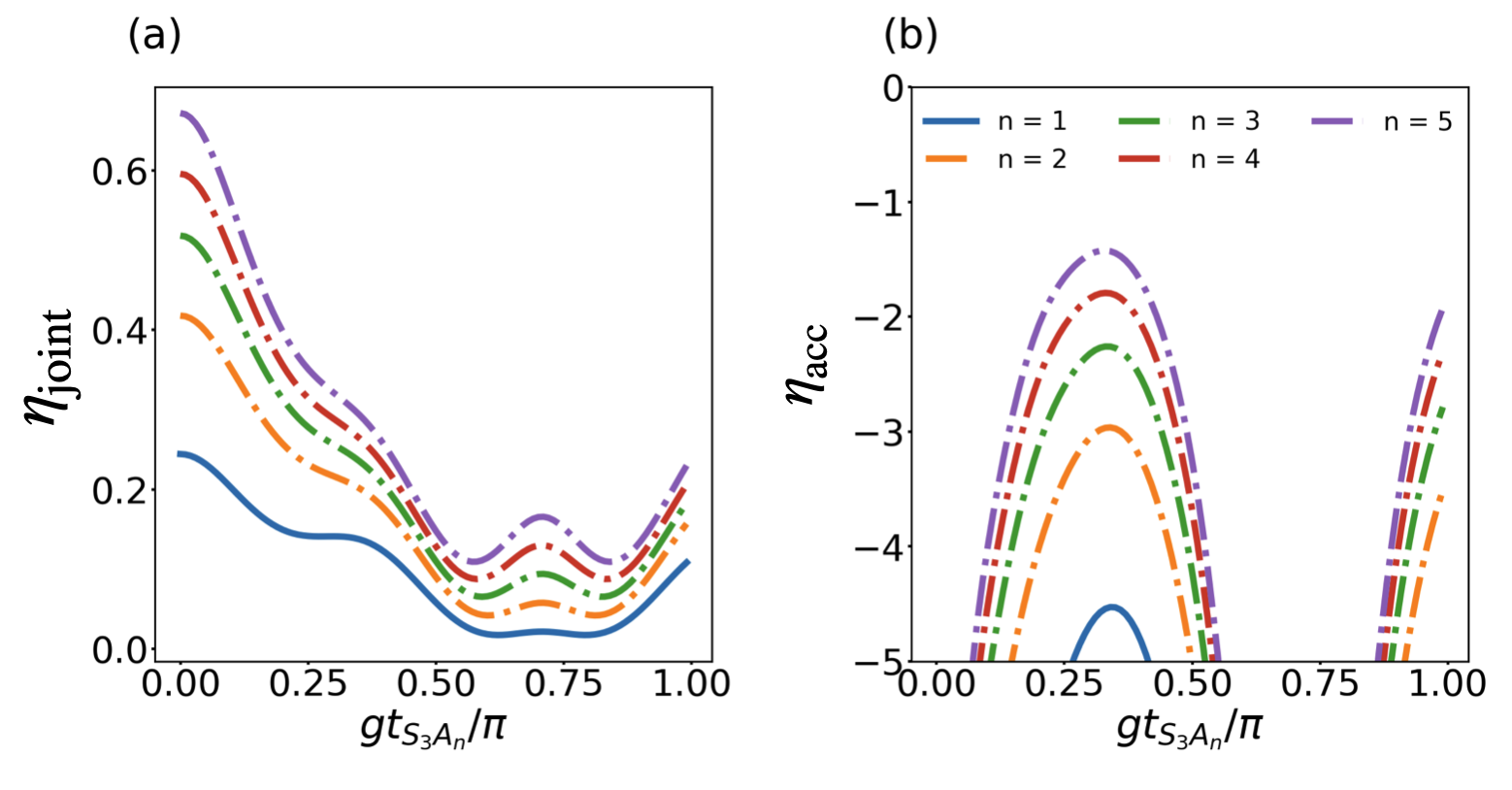}
\caption{ The behaviour of (a) $\eta_{\mathrm{joint}}$ and (b) $\eta_{\mathrm{acc}}$ with respect to the interaction strength of the qutrit with the last probe. Here, $g t_{S_{1}A_{n}} = 0.5$, $g t_{S_{2}A_{n}} = 0.2$ and $K_{B}T_{B_{3}}/\hbar\omega = 3$. All other details are similar to Fig. \ref{2baths_multi_anc}.}
\label{3baths_multi_anc}
\end{figure}

 Fig.~\ref {3baths_multi_anc}(a) indicates that achieving sufficient information about the system---while maintaining a decent lower bound on the estimation error---requires the coupling strength between the third bath and each ancilla to be approximately comparable to the interaction strength with the second probe. Increasing the dimension of the ancillary systems further ensures a finite lower bound on the estimation error, as illustrated in Fig.~\ref{3baths_multi_anc}(b). Additionally, enlarging the number of ancillas enhances the achievable precision, consistent with the behavior observed in the uncorrelated two-bath scenario. Although the case $n = 5$ does not surpass the thermal precision limit given by $\mathcal{F}_{th} = \mathrm{diag}(0.015, 0.197, 0.003)$, it is evident that sufficiently many ancillas will do so. As before, each ancilla undergoes a unitary rotation between successive interactions with the probe systems. These observations collectively demonstrate that our protocol enables accurate joint temperature estimation of $N$ baths, provided that the ancilla dimension is chosen appropriately.

\section{Proposed Experimental Framework}
\label{experiment}


The theoretical architecture considered here—namely, a sequence of two-level systems interacting one at a time with single-mode cavities that, between interactions, relax toward a thermal state—is the standard framework of micromaser and Rydberg cavity-QED physics. In these systems, the intracavity field dissipates into a thermal reservoir, and changes in accordance with a damped harmonic-oscillator master equation while the individual Rydberg atoms traverse a high-Q superconducting cavity undergoing coherent Jaynes–Cummings interactions. In cavity QED with Rydberg atoms, this description is well established at the phenomenological and microscopic levels \cite{haroche1989cavity, haroche2006exploring}. Moreover, between the cavities, a suitably chosen laser field rotates the Rydberg atom in the desired direction \cite{saffman2010}, thereby implementing the unitary rotation required by our protocol.\\
A solid-state implementation of the same physics is provided by circuit quantum electrodynamics, or circuit QED. The Jaynes–Cummings (or quantum Rabi) Hamiltonian of atomic cavity QED governs the interaction between planar or three-dimensional microwave resonators, which offer single electromagnetic modes, and superconducting qubits, which operate as artificial two-level atoms. Since Blais et al.'s seminal proposal, in which the circuit-QED model replicates the canonical atom–cavity interaction with similar rotating-wave couplings and open-system channels driving cavity decay into electromagnetic baths \cite{blais2004}, this equivalence has been made clear. According to recent studies, circuit QED is an on-chip version of atomic cavity QED with the same light-matter Hamiltonians, controllable dissipation, and designed surroundings \cite{blais2021, GU20171}.


These findings demonstrate that, while retaining the same Hamiltonian structure, dissipative mechanisms, and sequential interaction paradigm as in atomic cavity QED, the cascaded cavity-QED model studied here can be replicated in a superconducting platform with artificial atoms and planar microwave cavities.

\section{Conclusions}
\label{conclusion}

In this work, we have developed a multitemperature estimation protocol based on a collisional model and analyzed its fundamental limitations and achievable precision. We identified a necessary and sufficient condition for the singularity of the Fisher information matrix (FIM) when thermal information about two baths is encoded in a single qubit probe. By incorporating a controlled unitary operation into the protocol, we demonstrated how this singularity can be lifted, thereby restoring a finite and attainable bound on the estimation error. Although simultaneous estimation of multiple temperatures was not possible owing to the non-commutativity of the symmetric logarithmic derivatives (SLDs), we have introduced strategies to improve the performance compared to the individual thermal Fisher information. 

Furthermore, we have determined the optimal interaction strength between the ancillary systems and the probes that maximizes the extractable joint information about the baths while maintaining a fundamental lower bound on precision. Our results also reveal that correlated ancillas can outperform uncorrelated ones provided that more than two ancillas are employed, thereby highlighting the role of multipartite correlations in quantum thermometry. 

Finally, we have proposed a scalable scheme for estimating the temperatures of $N$ independent baths by increasing the dimensionality of the ancillary systems, offering a pathway toward high-dimensional, multi-parameter quantum thermal metrology.

\begin{acknowledgments}
S.G. and S.C. thank C. Mukhopadhyay and A. Das for fruitful discussions. The authors acknowledge support by MUR (Ministero dell’Universit\`{a} e della Ricerca) through the PNRR Project ICON-Q – Partenariato Esteso NQSTI – PE00000023 – Spoke 2 – CUP: J13C22000680006. R.L.F. also acknowledges support by MUR through the PNRR Project QUANTIP – Partenariato Esteso NQSTI – PE00000023 – Spoke 9 – CUP: E63C22002180006. 
\end{acknowledgments}

\appendix
\section{Importance of $U_{rot}(\theta, \hat{r})$}
\label{appendix_u_rot}

Let us first consider the single run scenario where the first ancilla is interacting with two probes thermalized at temperatures $T_1$ and $T_2$. We will demostrate here that the controlled unitary rotation in between two encodings is important to avoid the interdependencies of the two encoded parameters i.e. getting $\text{det}\mathcal{F}_Q \neq 0$. 




Considering $H_{S_1}$ and $H_{A_1}$ to be the system and ancilla Hamiltonian, respectively, whereas the interaction Hamiltonian giving rise to the collision is given by $H_{int}^{S_1A_1}$.
We consider
\begin{align}
    H_{S_1} &= \frac{\omega_{S_1}}{2} \sigma _z \\
    H_{A_1} &= \frac{\omega_{A_1}}{2} \sigma _z \\
    H_{int}^{S_1A_1}&=g_1(\sigma_+\otimes \sigma_-+\sigma_-\otimes \sigma_+)\\
    &=g_1\big(\ket{01}\bra{10}+\ket{10}\bra{01}\big)
\end{align}
Thus, we have
\begin{align}
    U&_{S_1A_1}=e^{-i  H_{int}^{S_1A_1}t^1_{S_1A_1}}\nonumber\\
    &=\ket{00}\bra{00}+\ket{11}\bra{11}\nonumber\\
    &+\cos g_1t^1_{S_1A_1}\ket{01}\bra{01}-i\sin g_1t^1_{S_1A_1}\ket{01}\bra{10}\nonumber\\
    &-i\sin g_1 t^1_{S_1A_1}\ket{10}\bra{01}+\cos g_1 t^1_{S_1A_1}\ket{10}\bra{10}
\end{align}
Let us assume for simplicity $g_i=g$ and $t^i_{S_iA_i}=\tau$ for all $i$. Thus in the form of a matrix we get
\begin{align}
    U_{S_1A_1}=\begin{bmatrix}
        1 & 0 & 0 & 0 \\
        0 & \cos g\tau & -i\sin g\tau & 0 \\
        0 & i\sin g\tau & \cos g\tau & 0 \\
        0 & 0 & 0 & 1
    \end{bmatrix}
\end{align}
Now, notice that if we consider the ancilla state to be $\eta$ and the thermalized system state to be $\rho_{th}(T_1)$, the action of the collision is given by
\begin{align}
    \mathcal{E}_{T_1}[\eta]&=\text{Tr}_{S_1}\big[U_{S_1A_1}(\rho_{th}(T_1)\otimes\eta)U_{S_1A_1}^{\dagger}\big]\\
    &=\sum_{i=0,1}{}_{S_1}\bra{i} U _{S_1A_1} \Big(\sum_{j=0,1}\lambda_j\ket{j}\bra{j}\otimes \eta\Big)U_{S_1A_1}^{\dagger}\ket{i}_S \nonumber\\
    &=\sum_{i,j=0,1}\sqrt{\lambda_j}\bra{i}U_{S_1A_1}\ket{j} \, \eta \,\sqrt{\lambda_j}\bra{j}U_{S_1A_1}^{\dagger}\ket{i} \nonumber\\
    &= \sum_{i,j} K_{ij} \, \eta \, K_{ij}^{\dagger}
\end{align}
where
\begin{align}\label{kraus_formula}
    K_{ij} =\sqrt{\lambda_j}\bra{i}U_{S_1A_1}\ket{j}.
\end{align}
Also note, we considered here $\rho_{th}(T_1)=\sum_{i}\lambda_i(T_1)\ket{i}\bra{i}$ where $\lambda_i(T_1)=\mathcal{Z}(T_1)^{-1}e^{-E_i/{T_1}}$ where $\mathcal{Z}(T_1)$ is the partition function. Thus, we have
\begin{align}
K_{00}&=\sqrt{\lambda_0}\big[\ket{0}\bra{0}+\cos g\tau \ket{1}\bra{1} \big] \\
K_{01}&=-i\sin g\tau \sqrt{\lambda_1}\ket{1}\bra{0} \\
K_{10}&=i\sin g\tau \sqrt{\lambda_0}\ket{0}\bra{1} \\
K_{11}&=\sqrt{\lambda_1}\Big(\ket{1}\bra{1}+\cos g\tau \ket{0}\bra{0} \Big)
\end{align}
We can now write down the action of $\Lambda_T$ as
\begin{align}
    \mathcal{E}_{T_1}\Big[\ket{0}\bra{0}\Big] &= \begin{bmatrix}
        \lambda_0 + \lambda_1 \cos^2 g\tau & 0 \\
        0 & \lambda_1 \sin^2 g\tau
    \end{bmatrix} \\
    \mathcal{E}_{T_1}\Big[\ket{0}\bra{1}\Big] &= \begin{bmatrix}
        0 & \cos g\tau \\
        0 & 0
    \end{bmatrix} \\
    \mathcal{E}_{T_1}\Big[\ket{1}\bra{0}\Big] &= \begin{bmatrix}
        0 & 0 \\
        \cos g\tau & 0
    \end{bmatrix} \\
    \mathcal{E}_{T_1}\Big[\ket{1}\bra{1}\Big] &= \begin{bmatrix}
        \lambda_0 \sin^2 g\tau & 0 \\
        0 & \lambda_1+\lambda_0 \cos^2 g\tau
    \end{bmatrix} 
\end{align}
Thus in the vectorized form i.e. $\ket{i}\bra{j}\rightarrow \ket{ij}$, we have
\begin{align} \hat{\hat{\mathcal{E}}}_{T_1}=\begin{bmatrix}
        \lambda_0+\lambda_1 \cos^2 g\tau & 0 & 0 & \lambda_0 \sin^2 g\tau \\
        0 & \cos g\tau & 0 & 0 \\
        0 & 0 & \cos g\tau & 0 \\
        \lambda_1 \sin^2 g\tau & 0 & 0 & \lambda_1+\lambda_0 \cos^2 g\tau
    \end{bmatrix}
\end{align}
Now for two consecutive collisions on the ancilla by first and the second bath, we consider $p=\lambda_0(T_1)$ for $\hat{\hat{\mathcal{E}}}_{T_1}$ and $q=\lambda_0(T_2)$ for $\hat{\hat{\mathcal{E}}}_{T_2}$. Thus, the action of two consecutive collisions is given by 
\begin{align}
    \hat{\hat{\mathcal{E}}}_{f} &=  \hat{\hat{\mathcal{E}}}_{T_2}\cdot\hat{\hat{\mathcal{E}}}_{T_1} \nonumber \\
    &= \begin{bmatrix}
        1-u(T_1,T_2) & 0 & 0 & v(T_1,T_2) \\
        0 & \cos^2 g\tau & 0 & 0 \\
        0 & 0 & \cos^2 g\tau & 0 \\
        u(T_1,T_2) & 0 & 0 & 1-v(T_1,T_2) 
    \end{bmatrix}
\end{align}
where

\begin{align}
    u(T_1,T_2) &= \sin^2 g\tau \Big[1-p+(1-q) \cos^2 g\tau\Big] \\
    v(T_1,T_2) &= \sin^2 g\tau \Big[p+q \cos^2 g\tau\Big]
\end{align}
Now, if the initial ancilla state is $\rho_{A_1}=\ket{1}\bra{1}$, using the above, we get $\hat{\hat{\mathcal{E}_f[\rho_{A_1}]}}=\hat{\hat{\mathcal{E}}}_f\cdot \hat{\hat{\rho_{A_1}}}$, so that
\begin{equation}
    \rho^f_{A_1}=\mathcal{E}_f[\rho_{A_1}] = \begin{bmatrix}
        v(T_1,T_2) & 0 \\
        0 & 1-v(T_1,T_2)
    \end{bmatrix}.
\end{equation}
Thus, the SLD operators $L_1$ and $L_2$ are given by
\begin{align}
    L_i= \sum_{j,k=0,1}\frac{2\bra{\alpha_j}\frac{\partial \rho^f_{A_1}}{\partial T_i}\ket{\alpha_k}}{\alpha_j+\alpha_k} \ket{\alpha_j}\bra{\alpha_k},
\end{align}
where we have the eigenvalue decomposition $\rho^f_{A_1}=\alpha_1\ket{\alpha_1}\bra{\alpha_1}+\alpha_2\ket{\alpha_2}\bra{\alpha_2}$. 
Hence, for this case, we have
\begin{align}
    L_i &= \big(\partial_i v\big)\Big(\frac{1}{v}\ket{0}\bra{0}-\frac{1}{1-v}\ket{1}\bra{1}\Big) \\
    \Tr&[\eta_f L_iL_j] = \frac{\big(\partial_i v\big)\big(\partial_j v\big)}{v(1-v)}
\end{align}
where we used the shorthand notation $\partial_i$ for $\frac{\partial}{\partial T_i}$.

Therefore, using this, we get
\begin{align}
    \mathcal{F}_{Q}(\eta_f)=\frac{1}{v(1-v)}\begin{bmatrix}
        (\partial_1 v)^2 & (\partial_1 v)(\partial_2 v) \\
        (\partial_2 v)(\partial_1 v) & (\partial_2 v)^2
    \end{bmatrix} 
\end{align}
As can be easily seen here that $\text{det}(\mathcal{F}_{Q})=0$ which would imply that the protocol automatically makes the two parameters dependent. In order to avoid this issue, we introduce a rotation $U_{rot}(\theta,\hat{r})$ between two successive collisions as shown in Fig. \ref{schematic}. The action of the rotation in the vectorized form is given by
\begin{align}
    \hat{\hat{\mathcal{U}}}_{rot}=\frac{1}{2}\begin{bmatrix}
        1 & i & -i & 1 \\
        i & 1 & 1 & -i \\
        -i & 1 & 1 & i \\
        1 & -i & i & 1
    \end{bmatrix}
\end{align}
Thus, $\hat{\hat{\mathcal{E}}}_f$ is given by
\begin{widetext}
\begin{align}
    &\hat{\hat{\mathcal{E}}}_{f} =  \hat{\hat{\mathcal{E}}}_{T_2}\cdot\hat{\hat{\mathcal{U}}}_{rot}\cdot\hat{\hat{\mathcal{E}}}_{T_1} = \begin{bmatrix}
    \mu(q)
    & \tfrac{i}{2}\,\zeta(g)\cos g
    & -\tfrac{i}{2}\,\zeta(g)\cos g
    & \mu(q) \\[6pt]
    i\chi(1 - p)
    & \tfrac{1}{2}\,\zeta(g)
    & \tfrac{1}{2}\,\zeta(g)
    & -i\chi(p) \\[6pt]
    -i\chi(1 - p)
    & \tfrac{1}{2}\,\zeta(g)
    & \tfrac{1}{2}\,\zeta(g)
    & i\chi(p) \\[6pt]
    1 - \mu(q)
    & -\tfrac{i}{2}\,\zeta(g)\cos g
    & \tfrac{i}{2}\,\zeta(g)\cos g
    & 1 - \mu(q)
\end{bmatrix}
\end{align}
\end{widetext}
where
\begin{align}
    \mu(q) &= \tfrac{1}{4}\!\big[(1 + 2q) + (1 - 2q)\cos 2g\big]\\
    \chi(p) &= \tfrac{1}{2} \big[(1 - p) + p\cos 2g\big]\cos g \\
    \zeta(g) &=\cos^2 g
\end{align}
Thus, one can easily see that $\rho^f_{A_1}=\Lambda_f[\rho_{A_1}]=\hat{\hat{\Lambda}}_f\cdot\hat{\hat{\rho_{A_1}}}$ is in the form
\begin{equation}
    \rho^f_{A_1}=\begin{bmatrix}
    \mu (q) & -i\chi(p) \\
    i\chi(p) & 1-\mu(q)
    \end{bmatrix}
\end{equation}
We will now show that $\text{det}(\mathcal{F}_{joint})\neq 0$ as long as $\rho^f_{A_1}$ is full rank. Assuming $\rho^f_{A_1}$ to be full rank we have 
\begin{equation}\label{anc_final_spectral} \rho^f_{A_1}=\alpha_0\ket{\alpha_0}\bra{\alpha_0}+\alpha_1\ket{\alpha_1}\bra{\alpha_1}
\end{equation} 
where
\begin{align}
    \ket{\alpha_k}&=\tfrac{1}{\sqrt{1+\beta_k^2}}\begin{bmatrix}
        1 \\ i \beta_k
    \end{bmatrix}\\
    \beta_k&= \tfrac{\alpha_k-\mu}{\chi}\,\,;\,\,
    \alpha_{0/1}=\tfrac{1}{2}\bigg[1\pm\sqrt{1-4\text{det}(\rho^f_{A_1})}\bigg]
\end{align}
Also, note we get 
\begin{align}
  \tfrac{\partial}{\partial T_1}\big(\rho^f_{A_1}\big) &=\dot{\chi}\sigma_y  \\
  \tfrac{\partial}{\partial T_2}\big(\rho^f_{A_1}\big) &=\dot{\mu}\sigma_z
\end{align}
 where $\dot{\mu}=d\mu/ d T_2$ and $\dot{\chi}=d\chi/ d T_1$. Thus, using Theorem \ref{theorem1} we can see that $\text{det}(\mathcal{F}_Q)\neq 0$. Also, we get
\begin{widetext}
    \begin{align}
    L_1&=\dot{\chi}\bigg\{\dfrac{2\beta_0}{\alpha_0 (1 + \beta_0^2)}\ket{\alpha_0}\bra{\alpha_0}+\dfrac{2\beta_1}{\alpha_1 (1 + \beta_1^2)}\ket{\alpha_1}\bra{\alpha_1}
    +\dfrac{2(\beta_0 + \beta_1)}{\sqrt{(1 + \beta_0^2)(1 + \beta_1^2)}} \Big(\ket{\alpha_0}\bra{\alpha_1}+\ket{\alpha_1}\bra{\alpha_0}\Big)\bigg\} \label{protocol_L_1_form}\\
    L_2&= \dot{\mu} \bigg\{ \dfrac{1 - \beta_0^2}{\alpha_0 (1 + \beta_0^2)}\ket{\alpha_0}\bra{\alpha_0} +  \dfrac{1 - \beta_1^2}{\alpha_1 (1 + \beta_1^2)} \ket{\alpha_1}\bra{\alpha_1} + \dfrac{2(1 - \beta_0 \beta_1)}{\sqrt{(1 + \beta_0^2)(1 + \beta_1^2)}}\Big(\ket{\alpha_0}\bra{\alpha_1}+\ket{\alpha_1}\bra{\alpha_0}\Big)\bigg\} \label{protocol_L_2_form}
\end{align}
\end{widetext}

It can be easily seen from above that $[L_1,L_2]\neq0$ as long as $\beta_0\neq\beta_1$, i.e., $\rho^f_{A_1}$ is full rank.

\clearpage
\bibliography{biblio_corr_bounds.bib}

@article{GU20171,
title = {Microwave photonics with superconducting quantum circuits},
journal = {Physics Reports},
volume = {718-719},
pages = {1-102},
year = {2017},
note = {Microwave photonics with superconducting quantum circuits},
issn = {0370-1573},
doi = {https://doi.org/10.1016/j.physrep.2017.10.002},
url = {https://www.sciencedirect.com/science/article/pii/S0370157317303290},
author = {Xiu Gu and Anton Frisk Kockum and Adam Miranowicz and Yu-xi Liu and Franco Nori},
keywords = {Quantum optics, Atomic physics, Circuit QED, Cavity QED, Superconducting circuits, Quantum bits, Quantum information processing, Photon detection, Waveguide QED, Microwave photonics},
abstract = {In the past 20 years, impressive progress has been made both experimentally and theoretically in superconducting quantum circuits, which provide a platform for manipulating microwave photons. This emerging field of superconducting quantum microwave circuits has been driven by many new interesting phenomena in microwave photonics and quantum information processing. For instance, the interaction between superconducting quantum circuits and single microwave photons can reach the regimes of strong, ultra-strong, and even deep-strong coupling. Many higher-order effects, unusual and less familiar in traditional cavity quantum electrodynamics with natural atoms, have been experimentally observed, e.g., giant Kerr effects, multi-photon processes, and single-atom induced bistability of microwave photons. These developments may lead to improved understanding of the counterintuitive properties of quantum mechanics, and speed up applications ranging from microwave photonics to superconducting quantum information processing. In this article, we review experimental and theoretical progress in microwave photonics with superconducting quantum circuits. We hope that this global review can provide a useful roadmap for this rapidly developing field.}
}

@article{blais2021,
  title = {Circuit quantum electrodynamics},
  author = {Blais, Alexandre and Grimsmo, Arne L. and Girvin, S. M. and Wallraff, Andreas},
  journal = {Rev. Mod. Phys.},
  volume = {93},
  issue = {2},
  pages = {025005},
  numpages = {72},
  year = {2021},
  month = {May},
  publisher = {American Physical Society},
  doi = {10.1103/RevModPhys.93.025005},
  url = {https://link.aps.org/doi/10.1103/RevModPhys.93.025005}
}

@article{blais2004,
  title = {Cavity quantum electrodynamics for superconducting electrical circuits: An architecture for quantum computation},
  author = {Blais, Alexandre and Huang, Ren-Shou and Wallraff, Andreas and Girvin, S. M. and Schoelkopf, R. J.},
  journal = {Phys. Rev. A},
  volume = {69},
  issue = {6},
  pages = {062320},
  numpages = {14},
  year = {2004},
  month = {Jun},
  publisher = {American Physical Society},
  doi = {10.1103/PhysRevA.69.062320},
  url = {https://link.aps.org/doi/10.1103/PhysRevA.69.062320}
}

@article{haroche1989cavity,
  title={Cavity quantum electrodynamics},
  author={Haroche, Serge and Kleppner, Daniel},
  journal={Physics Today},
  volume={42},
  number={1},
  pages={24--30},
  year={1989},
  publisher={American Institute of Physics},
    doi = {10.1063/1.881201},
url ={https://physicstoday.aip.org/features/cavity-quantum-electrodynamics}
}

@article{saffman2010,
  title = {Quantum information with Rydberg atoms},
  author = {Saffman, M. and Walker, T. G. and M\o{}lmer, K.},
  journal = {Rev. Mod. Phys.},
  volume = {82},
  issue = {3},
  pages = {2313--2363},
  numpages = {0},
  year = {2010},
  month = {Aug},
  publisher = {American Physical Society},
  doi = {10.1103/RevModPhys.82.2313},
  url = {https://link.aps.org/doi/10.1103/RevModPhys.82.2313}
}

@book{haroche2006exploring,
  title={Exploring the quantum: atoms, cavities, and photons},
  author={Haroche, Serge and Raimond, J-M},
  year={2006},
  publisher={Oxford university press},
url = {https://books.google.it/books?hl=en&lr=&id=ynwSDAAAQBAJ&oi=fnd&pg=PR9&dq=J.M+Raimond,+S+Haroche,+and+J.-M+Raimond.+Exploring+the+quantum+:+atoms,+cavities+and+photons+/+Serge+Haroche,+Jean-Michel+Raimond.+eng.+Oxford+graduate+texts.+Oxford+%3B+New+York:+Oxford+University+Press,+2006+(cited+on+pages+34,+35,+44,+45,+53,+92).&ots=2vTBB7Q7W6&sig=WIiIoZl_UhIAokGYF6d2xkoCRhE&redir_esc=y#v=onepage&q&f=false}
}

@article{luiz2022,
  title = {Machine classification for probe-based quantum thermometry},
  author = {Luiz, Fabr\'{\i}cio S. and Junior, A. de Oliveira and Fanchini, Felipe F. and Landi, Gabriel T.},
  journal = {Phys. Rev. A},
  volume = {105},
  issue = {2},
  pages = {022413},
  numpages = {6},
  year = {2022},
  month = {Feb},
  publisher = {American Physical Society},
  doi = {10.1103/PhysRevA.105.022413},
  url = {https://link.aps.org/doi/10.1103/PhysRevA.105.022413}
}

@article{karen2021,
  title = {Optimal Quantum Thermometry with Coarse-Grained Measurements},
  author = {Hovhannisyan, Karen V. and J\o{}rgensen, Mathias R. and Landi, Gabriel T. and Alhambra, \'Alvaro M. and Brask, Jonatan B. and Perarnau-Llobet, Mart\'{\i}},
  journal = {PRX Quantum},
  volume = {2},
  issue = {2},
  pages = {020322},
  numpages = {28},
  year = {2021},
  month = {May},
  publisher = {American Physical Society},
  doi = {10.1103/PRXQuantum.2.020322},
  url = {https://link.aps.org/doi/10.1103/PRXQuantum.2.020322}
}

@article{alves2022,
  title = {Bayesian estimation for collisional thermometry},
  author = {Alves, Gabriel O. and Landi, Gabriel T.},
  journal = {Phys. Rev. A},
  volume = {105},
  issue = {1},
  pages = {012212},
  numpages = {10},
  year = {2022},
  month = {Jan},
  publisher = {American Physical Society},
  doi = {10.1103/PhysRevA.105.012212},
  url = {https://link.aps.org/doi/10.1103/PhysRevA.105.012212}
}

@article{Mihailescu_2024,
doi = {10.1088/2058-9565/ad438d},
url = {https://doi.org/10.1088/2058-9565/ad438d},
year = {2024},
month = {may},
publisher = {IOP Publishing},
volume = {9},
number = {3},
pages = {035033},
author = {Mihailescu, George and Bayat, Abolfazl and Campbell, Steve and Mitchell, Andrew K},
title = {Multiparameter critical quantum metrology with impurity probes},
journal = {Quantum Science and Technology},
abstract = {Quantum systems can be used as probes in the context of metrology for enhanced parameter estimation. In particular, the delicacy of critical systems to perturbations can make them ideal sensors. Arguably the simplest realistic probe system is a spin- impurity, which can be manipulated and measured in-situ when embedded in a fermionic environment. Although entanglement between a single impurity probe and its environment produces nontrivial many-body effects, criticality cannot be leveraged for sensing. Here we introduce instead the two-impurity Kondo model as a novel paradigm for critical quantum metrology, and examine the multiparameter estimation scenario at finite temperature. We explore the full metrological phase diagram numerically and obtain exact analytic results near criticality. Enhanced sensitivity to the inter-impurity coupling driving a second-order phase transition is evidenced by diverging quantum Fisher information (QFI) and quantum signal-to-noise ratio (QSNR). However, with uncertainty in both coupling strength and temperature, the multiparameter QFI matrix becomes singular—even though the parameters to be estimated are independent—resulting in vanishing QSNRs. We demonstrate that by applying a known control field, the singularity can be removed and measurement sensitivity restored. For general systems, we show that the degradation in the QSNR due to uncertainties in another parameter is controlled by the degree of correlation between the unknown parameters.}
}

@misc{gutiérrez2025,
      title={Quantum error correction for multiparameter metrology}, 
      author={Mauricio Gutiérrez and Chiranjib Mukhopadhyay and Victor Montenegro and Abolfazl Bayat},
      year={2025},
      eprint={2511.04018},
      archivePrefix={arXiv},
      primaryClass={quant-ph},
      url={https://arxiv.org/abs/2511.04018}, 
}

@article{pati2020,
  title = {Quantum precision thermometry with weak measurements},
  author = {Pati, Arun Kumar and Mukhopadhyay, Chiranjib and Chakraborty, Sagnik and Ghosh, Sibasish},
  journal = {Phys. Rev. A},
  volume = {102},
  issue = {1},
  pages = {012204},
  numpages = {9},
  year = {2020},
  month = {Jul},
  publisher = {American Physical Society},
  doi = {10.1103/PhysRevA.102.012204},
  url = {https://link.aps.org/doi/10.1103/PhysRevA.102.012204}
}

@article{francesco2022,
  title = {Probe Incompatibility in Multiparameter Noisy Quantum Metrology},
  author = {Albarelli, Francesco and Demkowicz-Dobrza\ifmmode \acute{n}\else \'{n}\fi{}ski, Rafa\l{}},
  journal = {Phys. Rev. X},
  volume = {12},
  issue = {1},
  pages = {011039},
  numpages = {28},
  year = {2022},
  month = {Mar},
  publisher = {American Physical Society},
  doi = {10.1103/PhysRevX.12.011039},
  url = {https://link.aps.org/doi/10.1103/PhysRevX.12.011039}
}

@article{Gorecki2020optimalprobeserror,
  doi = {10.22331/q-2020-07-02-288},
  url = {https://doi.org/10.22331/q-2020-07-02-288},
  title = {Optimal probes and error-correction schemes in multi-parameter quantum metrology},
  author = {G{\'{o}}recki, Wojciech and Zhou, Sisi and Jiang, Liang and Demkowicz-Dobrza{\'{n}}ski, Rafa{\l{}}},
  journal = {{Quantum}},
  issn = {2521-327X},
  publisher = {{Verein zur F{\"{o}}rderung des Open Access Publizierens in den Quantenwissenschaften}},
  volume = {4},
  pages = {288},
  month = jul,
  year = {2020}
}

@article{Demkowicz-Dobrzański_2020,
doi = {10.1088/1751-8121/ab8ef3},
url = {https://doi.org/10.1088/1751-8121/ab8ef3},
year = {2020},
month = {aug},
publisher = {IOP Publishing},
volume = {53},
number = {36},
pages = {363001},
author = {Demkowicz-Dobrzański, Rafał and Górecki, Wojciech and Guţă, Mădălin},
title = {Multi-parameter estimation beyond quantum Fisher information},
journal = {J. Phys. A: Math. Theor.},
abstract = {This review aims at gathering the most relevant quantum multi-parameter estimation methods that go beyond the direct use of the quantum Fisher information concept. We discuss in detail the Holevo Cramér–Rao bound, the quantum local asymptotic normality approach as well as Bayesian methods. Even though the fundamental concepts in the field have been laid out more than forty years ago, a number of important results have appeared much more recently. Moreover, the field drew increased attention recently thanks to advances in practical quantum metrology proposals and implementations that often involve estimation of multiple parameters simultaneously. Since the topics covered in these review are spread in the literature and often served in a very formal mathematical language, one of the main goals of this review is to provide a largely self-contained work that allows the reader to follow most of the derivations and get an intuitive understanding of the interrelations between different concepts using a set of simple yet representative examples involving qubit and Gaussian shift models.}
}

@article{ragy2016,
  title = {Compatibility in multiparameter quantum metrology},
  author = {Ragy, Sammy and Jarzyna, Marcin and Demkowicz-Dobrza\ifmmode \acute{n}\else \'{n}\fi{}ski, Rafa\l{}},
  journal = {Phys. Rev. A},
  volume = {94},
  issue = {5},
  pages = {052108},
  numpages = {11},
  year = {2016},
  month = {Nov},
  publisher = {American Physical Society},
  doi = {10.1103/PhysRevA.94.052108},
  url = {https://link.aps.org/doi/10.1103/PhysRevA.94.052108}
}

@article{Bressanini_2024,
doi = {10.1088/1751-8121/ad6364},
url = {https://doi.org/10.1088/1751-8121/ad6364},
year = {2024},
month = {jul},
publisher = {IOP Publishing},
volume = {57},
number = {31},
pages = {315305},
author = {Bressanini, Gabriele and Genoni, Marco G and Kim, M S and Paris, Matteo G A},
title = {Multi-parameter quantum estimation of single- and two-mode pure Gaussian states},
journal = {J. Phys. A: Math. Theor.},
abstract = {We discuss the ultimate precision bounds on the multiparameter estimation of single- and two-mode pure Gaussian states. By leveraging on previous approaches that focused on the estimation of a complex displacement only, we derive the Holevo Cramér–Rao bound (HCRB) for both displacement and squeezing parameter characterizing single and two-mode squeezed states. In the single-mode scenario, we obtain an analytical bound and find that it degrades monotonically as the squeezing increases. Furthermore, we prove that heterodyne detection is nearly optimal in the large squeezing limit, but in general the optimal measurement must include non-Gaussian resources. On the other hand, in the two-mode setting, the HCRB improves as the squeezing parameter grows and we show that it can be attained using double-homodyne detection.}
}

@misc{mukhopadhyay2025,
      title={Beating joint quantum estimation limits with stepwise multiparameter metrology}, 
      author={Chiranjib Mukhopadhyay and Abolfazl Bayat and Victor Montenegro and Matteo G. A. Paris},
      year={2025},
      eprint={2506.06075},
      archivePrefix={arXiv},
      primaryClass={quant-ph},
      url={https://arxiv.org/abs/2506.06075}
}

@article{fiusa2025,
  title = {Queued quantum collision models},
  author = {Fiusa, Guilherme and Landi, Gabriel T.},
  journal = {Phys. Rev. A},
  volume = {111},
  issue = {3},
  pages = {032220},
  numpages = {10},
  year = {2025},
  month = {Mar},
  publisher = {American Physical Society},
  doi = {10.1103/PhysRevA.111.032220},
  url = {https://link.aps.org/doi/10.1103/PhysRevA.111.032220}
}

@article{CICCARELLO20221,
title = {Quantum collision models: Open system dynamics from repeated interactions},
journal = {Phys. Rep.},
volume = {954},
pages = {1-70},
year = {2022},
note = {Quantum collision models: Open system dynamics from repeated interactions},
issn = {0370-1573},
doi = {https://doi.org/10.1016/j.physrep.2022.01.001},
url = {https://www.sciencedirect.com/science/article/pii/S0370157322000035},
author = {Francesco Ciccarello and Salvatore Lorenzo and Vittorio Giovannetti and G. Massimo Palma},
keywords = {Open quantum systems, Repeated interactions, Quantum thermodynamics, Quantum non-Markovian dynamics, Quantum trajectories, Quantum weak measurements, Quantum optics, Input–output formalism, Cascaded master equations},
abstract = {We present an extensive introduction to quantum collision models (CMs), also known as repeated interactions schemes: a class of microscopic system–bath models for investigating open quantum systems dynamics whose use is currently spreading in a number of research areas. Through dedicated sections and a pedagogical approach, we discuss the CMs definition and general properties, their use for the derivation of master equations, their connection with quantum trajectories, their application in non-equilibrium quantum thermodynamics, their non-Markovian generalizations, their emergence from conventional system–bath microscopic models and link to the input–output formalism. The state of the art of each involved research area is reviewed through dedicated sections. The article is supported by several complementary appendices, which review standard concepts/tools of open quantum systems used in the main text with the goal of making the material accessible even to readers possessing only a basic background in quantum mechanics. The paper could also be seen itself as a friendly, physically intuitive, introduction to fundamentals of open quantum systems theory since most main concepts of this are treated such as quantum maps, Lindblad master equation, steady states, POVMs, quantum trajectories and stochastic Schrödinger equation.}
}

@misc{mendonça2025,
      title={Information flow-enhanced precision in collisional quantum thermometry}, 
      author={Taysa M. Mendonça and Diogo O. Soares-Pinto and Mauro Paternostro},
      year={2025},
      eprint={2407.21618},
      archivePrefix={arXiv},
      primaryClass={quant-ph},
      url={https://arxiv.org/abs/2407.21618}, 
}

@article{gabriel2024,
  title = {Collisional thermometry for Gaussian systems},
  author = {Alves, Gabriel O. and Santos, Marcelo A. F. and Landi, Gabriel T.},
  journal = {Phys. Rev. A},
  volume = {110},
  issue = {5},
  pages = {052421},
  numpages = {12},
  year = {2024},
  month = {Nov},
  publisher = {American Physical Society},
  doi = {10.1103/PhysRevA.110.052421},
  url = {https://link.aps.org/doi/10.1103/PhysRevA.110.052421}
}

@article{rubio2021,
  title = {Global Quantum Thermometry},
  author = {Rubio, Jes\'us and Anders, Janet and Correa, Luis A.},
  journal = {Phys. Rev. Lett.},
  volume = {127},
  issue = {19},
  pages = {190402},
  numpages = {6},
  year = {2021},
  month = {Nov},
  publisher = {American Physical Society},
  doi = {10.1103/PhysRevLett.127.190402},
  url = {https://link.aps.org/doi/10.1103/PhysRevLett.127.190402}
}

@article{luis2015,
  title = {Individual Quantum Probes for Optimal Thermometry},
  author = {Correa, Luis A. and Mehboudi, Mohammad and Adesso, Gerardo and Sanpera, Anna},
  journal = {Phys. Rev. Lett.},
  volume = {114},
  issue = {22},
  pages = {220405},
  numpages = {5},
  year = {2015},
  month = {Jun},
  publisher = {American Physical Society},
  doi = {10.1103/PhysRevLett.114.220405},
  url = {https://link.aps.org/doi/10.1103/PhysRevLett.114.220405}
}

@article{thomas2010,
  title = {Quantum limits of thermometry},
  author = {Stace, Thomas M.},
  journal = {Phys. Rev. A},
  volume = {82},
  issue = {1},
  pages = {011611},
  numpages = {4},
  year = {2010},
  month = {Jul},
  publisher = {American Physical Society},
  doi = {10.1103/PhysRevA.82.011611},
  url = {https://link.aps.org/doi/10.1103/PhysRevA.82.011611}
}

@Inbook{DePasquale2018,
author="De Pasquale, Antonella
and Stace, Thomas M.",
editor="Binder, Felix
and Correa, Luis A.
and Gogolin, Christian
and Anders, Janet
and Adesso, Gerardo",
title="Quantum Thermometry",
bookTitle="Thermodynamics in the Quantum Regime: Fundamental Aspects and New Directions",
year="2018",
publisher="Springer International Publishing",
address="Cham",
pages="503--527",
abstract="We discuss the application of techniques of quantum estimation theory and quantum metrology to thermometry. The ultimate limit to the precision at which the temperature of a system at thermal equilibrium can be determined is related to the heat capacity when global measurements are performed on the system. We prove that if technical or practical limitations restrict our capabilities to local probing, the highest achievable accuracy to temperature estimation reduces to a sort of mesoscopic version of the heat capacity. Adopting a more practical perspective, we also discuss the relevance of qubit systems as optimal quantum thermometers, in order to retrieve the temperature, or to discriminate between two temperatures, characterizing a thermal reservoir. We show that quantum coherence and entanglement in a probe system can facilitate faster, or more accurate measurements of temperature. While not surprising given this has been demonstrated in phase estimation, temperature is not a conventional quantum observable, therefore these results extend the theory of parameter estimation to measurement of non-Hamiltonian quantities. Finally we point out the advantages brought by a less standard estimation technique based on sequential measurements, when applied to quantum thermometry.",
isbn="978-3-319-99046-0",
doi="10.1007/978-3-319-99046-0_21",
url="https://doi.org/10.1007/978-3-319-99046-0_21"
}

@article{Ivanov_2019,
   title={Quantum thermometry with trapped ions},
   volume={436},
   ISSN={0030-4018},
   url={http://dx.doi.org/10.1016/j.optcom.2018.12.013},
   DOI={10.1016/j.optcom.2018.12.013},
   journal={Optics Communications},
   publisher={Elsevier BV},
   author={Ivanov, Peter A.},
   year={2019},
   month=apr, pages={101–107} }

@article{hofer2017,
  title = {Quantum Thermal Machine as a Thermometer},
  author = {Hofer, Patrick P. and Brask, Jonatan Bohr and Perarnau-Llobet, Mart\'{\i} and Brunner, Nicolas},
  journal = {Phys. Rev. Lett.},
  volume = {119},
  issue = {9},
  pages = {090603},
  numpages = {7},
  year = {2017},
  month = {Sep},
  publisher = {American Physical Society},
  doi = {10.1103/PhysRevLett.119.090603},
  url = {https://link.aps.org/doi/10.1103/PhysRevLett.119.090603}
}

@article{matteo2012,
  title = {Qubit-assisted thermometry of a quantum harmonic oscillator},
  author = {Brunelli, Matteo and Olivares, Stefano and Paternostro, Mauro and Paris, Matteo G. A.},
  journal = {Phys. Rev. A},
  volume = {86},
  issue = {1},
  pages = {012125},
  numpages = {7},
  year = {2012},
  month = {Jul},
  publisher = {American Physical Society},
  doi = {10.1103/PhysRevA.86.012125},
  url = {https://link.aps.org/doi/10.1103/PhysRevA.86.012125}
}

@article{matteo2011,
  title = {Qubit thermometry for micromechanical resonators},
  author = {Brunelli, Matteo and Olivares, Stefano and Paris, Matteo G. A.},
  journal = {Phys. Rev. A},
  volume = {84},
  issue = {3},
  pages = {032105},
  numpages = {9},
  year = {2011},
  month = {Sep},
  publisher = {American Physical Society},
  doi = {10.1103/PhysRevA.84.032105},
  url = {https://link.aps.org/doi/10.1103/PhysRevA.84.032105}
}

@article{kaubruegger2023optimal,
  title={Optimal and variational multiparameter quantum metrology and vector-field sensing},
  author={Kaubruegger, Raphael and Shankar, Athreya and Vasilyev, Denis V and Zoller, Peter},
  journal={PRX Quantum},
  volume={4},
  number={2},
  pages={020333},
  year={2023},
  publisher={APS},
  url={https://doi.org/10.1103/PRXQuantum.4.020333}
}

@article{goldberg2021intrinsic,
  title={Intrinsic sensitivity limits for multiparameter quantum metrology},
  author={Goldberg, Aaron Z and S{\'a}nchez-Soto, Luis L and Ferretti, Hugo},
  journal={Phys. Rev. Lett.},
  volume={127},
  number={11},
  pages={110501},
  year={2021},
  publisher={APS},
  url = {https://doi.org/10.1103/PhysRevLett.127.110501}
}

@article{montenegro2025quantum,
  title={Quantum metrology and sensing with many-body systems},
  author={Montenegro, Victor and Mukhopadhyay, Chiranjib and Yousefjani, Rozhin and Sarkar, Saubhik and Mishra, Utkarsh and Paris, Matteo GA and Bayat, Abolfazl},
  journal={Phys. Rep.},
  volume={1134},
  pages={1--62},
  year={2025},
  publisher={Elsevier},
  url = {https://doi.org/10.1016/j.physrep.2025.05.005}
}

@article{szczykulska2016multi,
  title={Multi-parameter quantum metrology},
  author={Szczykulska, Magdalena and Baumgratz, Tillmann and Datta, Animesh},
  journal={Advances in Physics: X},
  volume={1},
  number={4},
  pages={621--639},
  year={2016},
  publisher={Taylor \& Francis},
  url = {https://doi.org/10.1080/23746149.2016.1230476}
}

@article{helstrom1969quantum,
  title={Quantum detection and estimation theory},
  author={Helstrom, Carl W},
  journal={J. Stat. Phys.},
  volume={1},
  number={2},
  pages={231--252},
  year={1969},
  publisher={Springer},
  url= {https://doi.org/10.1007/BF01007479}
}

@book{paris2004quantum,
  title={Quantum state estimation},
  author={Paris, Matteo and Rehacek, Jaroslav},
  volume={649},
  year={2004},
  publisher={Springer Science \& Business Media}
}

@article{yang2025,
  title = {Overcoming Quantum Metrology Singularity through Sequential Measurements},
  author = {Yang, Yaoling and Montenegro, Victor and Bayat, Abolfazl},
  journal = {Phys. Rev. Lett.},
  volume = {135},
  issue = {1},
  pages = {010401},
  numpages = {7},
  year = {2025},
  month = {Jun},
  publisher = {American Physical Society},
  doi = {10.1103/gsyv-jllq},
  url = {https://link.aps.org/doi/10.1103/gsyv-jllq}
}

@article{seah2019collisional,
  title={Collisional quantum thermometry},
  author={Seah, Stella and Nimmrichter, Stefan and Grimmer, Daniel and Santos, Jader P and Scarani, Valerio and Landi, Gabriel T},
  journal={Phys. Rev. Lett.},
  volume={123},
  number={18},
  pages={180602},
  year={2019},
  publisher={APS},
  url= {https://doi.org/10.1103/PhysRevLett.123.180602}
}

@article{sabin2014impurities,
  title={Impurities as a quantum thermometer for a Bose-Einstein condensate},
  author={Sab{\'\i}n, Carlos and White, Angela and Hackermuller, Lucia and Fuentes, Ivette},
  journal={Sci. Rep.},
  volume={4},
  number={1},
  pages={6436},
  year={2014},
  publisher={Nature Publishing Group UK London},
  url = {https://doi.org/10.1038/srep06436}
}

@article{brunelli2012qubit,
  title={Qubit-assisted thermometry of a quantum harmonic oscillator},
  author={Brunelli, Matteo and Olivares, Stefano and Paternostro, Mauro and Paris, Matteo GA},
  journal={Phys. Rev. A},
  volume={86},
  number={1},
  pages={012125},
  year={2012},
  publisher={APS},
  url={https://doi.org/10.1103/PhysRevA.86.012125}
}

@article{haupt2014single,
  title={Single quantum dot as an optical thermometer for millikelvin temperatures},
  author={Haupt, Florian and Imamoglu, Atac and Kroner, Martin},
  journal={Phys. Rev. Appl.},
  volume={2},
  number={2},
  pages={024001},
  year={2014},
  publisher={APS},
  url = {https://doi.org/10.1103/PhysRevApplied.2.024001}
}

@article{seilmeier2014optical,
  title={Optical thermometry of an electron reservoir coupled to a single quantum dot in the Millikelvin range},
  author={Seilmeier, F and Hauck, M and Schubert, E and Schinner, GJ and Beavan, SE and H{\"o}gele, A},
  journal={Phys. Rev. Appl.},
  volume={2},
  number={2},
  pages={024002},
  year={2014},
  publisher={APS},
  url={https://doi.org/10.1103/PhysRevApplied.2.024002}
}

@article{toyli2013fluorescence,
  title={Fluorescence thermometry enhanced by the quantum coherence of single spins in diamond},
  author={Toyli, David M and de Las Casas, Charles F and Christle, David J and Dobrovitski, Viatcheslav V and Awschalom, David D},
  journal={Proceedings of the National Academy of Sciences},
  volume={110},
  number={21},
  pages={8417--8421},
  year={2013},
  publisher={National Academy of Sciences},
  url={https://doi.org/10.1073/pnas.1306825110}
}

@article{kucsko2013nanometre,
  title={Nanometre-scale thermometry in a living cell},
  author={Kucsko, Georg and Maurer, Peter C and Yao, Norman Ying and Kubo, MICHAEL and Noh, Hyun Jong and Lo, Po Kam and Park, Hongkun and Lukin, Mikhail D},
  journal={Nature},
  volume={500},
  number={7460},
  pages={54--58},
  year={2013},
  publisher={Nature Publishing Group UK London},
  url={https://doi.org/10.1038/nature12373}
}

@article{yang2011quantum,
  title={Quantum dot nano thermometers reveal heterogeneous local thermogenesis in living cells},
  author={Yang, Jui-Ming and Yang, Haw and Lin, Liwei},
  journal={ACS nano},
  volume={5},
  number={6},
  pages={5067--5071},
  year={2011},
  publisher={ACS Publications},
  url ={https://pubs.acs.org/doi/10.1021/nn201142f}
}

@article{neumann2013high,
  title={High-precision nanoscale temperature sensing using single defects in diamond},
  author={Neumann, Philipp and Jakobi, Ingmar and Dolde, Florian and Burk, Christian and Reuter, Rolf and Waldherr, Gerald and Honert, Jan and Wolf, Thomas and Brunner, Andreas and Shim, Jeong Hyun and others},
  journal={Nano letters},
  volume={13},
  number={6},
  pages={2738--2742},
  year={2013},
  publisher={ACS Publications},
  url = {https://pubs.acs.org/doi/10.1021/nl401216y}
}

@article{mehboudi2019thermometry,
  title={Thermometry in the quantum regime: recent theoretical progress},
  author={Mehboudi, Mohammad and Sanpera, Anna and Correa, Luis A},
  journal={J. Phys. A: Math. Theor.},
  volume={52},
  number={30},
  pages={303001},
  year={2019},
  publisher={IOP Publishing},
  url ={https://iopscience.iop.org/article/10.1088/1751-8121/ab2828}
  
}

@article{liu2020,
  title={Quantum {Fisher} information matrix and multiparameter estimation},
  author={Liu, Jing and Yuan, Haidong and Lu, Xiao-Ming and Wang, Xiaoguang},
  journal={J. Phys. A: Math. Theor.},
  volume={53},
  number={2},
  pages={023001},
  year={2020},
  publisher={IOP Publishing},
  url={https://doi.org/10.1088/1751-8121/ab5d4d}
}

@article{Candeloro_2024,
doi = {10.1088/2058-9565/ad7498},
url = {https://doi.org/10.1088/2058-9565/ad7498},
year = {2024},
month = {sep},
publisher = {IOP Publishing},
volume = {9},
number = {4},
pages = {045045},
author = {Candeloro, Alessandro and Pazhotan, Zahra and Paris, Matteo G A},
title = {Dimension matters: precision and incompatibility in multi-parameter quantum estimation models},
journal = {Quantum Science and Technology},
abstract = {We study the role of probe dimension in determining the bounds of precision and the level of incompatibility in multi-parameter quantum estimation problems. In particular, we focus on the paradigmatic case of unitary encoding generated by  and compare precision and incompatibility in the estimation of the same parameters across representations of different dimensions. For two- and three-parameter unitary models, we prove that if the dimension of the probe is smaller than the number of parameters, then simultaneous estimation is not possible (the quantum Fisher matrix is singular). If the dimension is equal to the number of parameters, estimation is possible but the model exhibits maximal (asymptotic) incompatibility. However, for larger dimensions, there is always a state for which the incompatibility vanishes, and the symmetric Cramér-Rao bound is achievable. We also critically examine the performance of the so-called asymptotic incompatibility (AI) in characterising the difference between the Holevo-Cramér-Rao bound and the Symmetric Logarithmic Derivative one, showing that the AI measure alone may fail to adequately quantify this gap. Assessing the determinant of the Quantum Fisher Information Matrix is crucial for a precise characterisation of the model’s nature. Nonetheless, the AI measure still plays a relevant role since it encapsulates the non-classicality of the model in one scalar quantity rather than in a matrix form (i.e. the Uhlmann curvature).}
}

\end{document}